\documentclass{article}
\usepackage[utf8]{inputenc}

\usepackage{amsmath}
\usepackage{amssymb}
\usepackage{amsthm}
\usepackage{tikz}
\usepackage{graphicx}
\usepackage{float}
\usepackage{algorithm}
\usepackage[noend]{algpseudocode}
\usepackage{tabularx}
\usepackage[margin=1in]{geometry}
\usepackage{arydshln}
\usepackage{multirow}
\usepackage{wrapfig}
\usepackage{xcolor}
\usepackage{xspace}
\usepackage{hyperref}
\usepackage{comment}
\usepackage{bbm}
\usepackage{todonotes}

\usetikzlibrary{positioning,calc,shapes,arrows}
\usetikzlibrary{backgrounds}
\usetikzlibrary{matrix,shadows,arrows} 
\usetikzlibrary{decorations.pathreplacing}

\def\etal.{et\penalty50\ al.}
\theoremstyle{plain}
\newtheorem{theorem}{Theorem}[section]

\newtheorem{fact}[theorem]{Fact}

\theoremstyle{definition}

\theoremstyle{remark}
\newtheorem{question}{Question}[section]

\newtheorem{openproblem}[question]{Open Problem}

\theoremstyle{plain}
\newtheorem*{theorem*}{Theorem}

\DeclareMathOperator*{\argmin}{arg\,min}

\newcommand*\samethanks[1][\value{footnote}]{\footnotemark[#1]}
\newcommand{\infdiv}[2]{D(#1||#2)}

\algnewcommand{\LineComment}[1]{\State \(\triangleright\) #1}

\providecommand{\keywords}[1]
{
  \small	
  \textbf{\textit{Keywords---}} #1
}

\title{Advice Complexity of Online Non-Crossing Matching}	
\author{Ali Mohammad Lavasani\\	
Concordia University, CSSE\\	
\href{mailto:ali.mohammadlavasani@concordia.ca}{ali.mohammadlavasani@concordia.ca\thanks{Research is supported by NSERC.}}\and	
Denis Pankratov\\	
Concordia University, CSSE\\	
\href{mailto:denis.pankratov@concordia.ca}{denis.pankratov@concordia.ca}\samethanks}	
\date{\today}	
\begin{document}	
\maketitle

\begin{abstract}

    We study online matching in the Euclidean $2$-dimesional plane with the non-crossing constraint. The offline version was introduced by Atallah in 1985 and the online version was introduced and studied more recently by Bose et al. The input to the monochromatic non-crossing matching (MNM) problem consists of a sequence of points. 
    Upon arrival of a point, an algorithm can decide to match it with a previously unmatched point or leave it unmatched. The line segments corresponding to the edges in the matching should not cross each other, and the goal is to maximize the size of the matching.
    The decisions are irrevocable, and while an optimal offline solution always matches all the points, an online algorithm cannot match all the points in the worst case, unless it is given some additional information, i.e., advice. In the bichromatic  version (BNM), blue points are given in advance and the same number of red points arrive online. The goal is to maximize the number of red points matched to blue points without creating any crossings. 

    We show that the advice complexity of solving BNM  optimally on a circle (or, more generally, on inputs in a convex position) is tightly bounded by the logarithm  of  the $n^\text{th}$ Catalan  number  from  above  and  below.   This  result  corrects  the  previous  claim  of Bose et al. that the advice complexity is $\log(n!)$.  At the heart of the result is a connection between the non-crossing constraint in online inputs and the $231$-avoiding property of permutations of $n$ elements. We also show a lower bound of $n/3-1$ and an upper bound of $3n$ on the advice complexity for MNM on a plane.  This gives an exponential improvement over the previously best-known lower bound and an improvement in the constant of the leading term in the upper bound.  
    In addition, we establish a  lower bound of $\frac{\alpha}{2}\infdiv{\frac{2(1-\alpha)}{\alpha}}{1/4}n$  on the advice complexity for achieving competitive ratio $\alpha \in (16/17,1)$ for MNM on a circle where $\infdiv{p}{q}$ is the relative entropy between two Bernoulli random variables with parameters $p$ and $q$.
    Standard tools from advice complexity, such as partition trees and reductions from string guessing problems, do not seem to apply to MNM/BNM, so we have to design our lower bounds from first principles.
\end{abstract}

\keywords{Online algorithms, advice complexity, non-crossing matching in $2$D.}

\section{Introduction}
Suppose that $2n$ points in general position are revealed one by one in the Euclidean plane. When and only when a point is revealed you have a choice of connecting it with a previously revealed (but yet unconnected) point by a straight-line segment. Each point can be connected to (alternatively, we say “matched to”) at most one other point and the connecting segments should not intersect each other: the connecting segments should form a non-crossing matching. Each decision on how to connect a point is irrevocable and the goal is to maximize the number of connected points.

It is easy to see that a perfect matching (matching involving all the points) of minimum total length\footnote{Total length of the matching is the sum of lengths of segments corresponding to edges in the matching.} is non-crossing. One can establish it by contradiction as follows. Suppose that a perfect matching of minimum total length contains two crossing segments. These crossing segments form diagonals of a quadrilateral and they can be exchanged for two sides of this quadrilateral giving another perfect matching of smaller total length. This contradicts the assumption of the initial perfect matching being of minimum total length. Thus, an optimal solution always matches all points.
However, this optimal solution requires the knowledge of the entire input sequence, and in our version of the problem the points arrive online and decisions are irrevocable. The question we are concerned with in this paper is how much additional information an online algorithm needs so that it can achieve optimal or near-optimal performance. We study two variants of the above problem: Online Monochromatic Non-Crossing Matching (or MNM, for short) where each point can be potentially matched with any other point, and Online Bichromatic Non-Crossing Matching (or BNM, for short) where $n$ of the points are colored blue and $n$ of the points are colored red, all the blue points arrive first, and points can be matched only if they have different colors. BNM can be thought of as a bipartite variant of the MNM in the geometric setting.

Matching problems in general abstract graphs have a long history dating back to K\"{o}nig’s theorem from 1931 and Hall’s marriage theorem from 1935. The offline setting of matching algorithms, where the entire input is known in advance, is fairly well understood: efficient polynomial time algorithms are known for the general and bipartite versions of the problem~\cite{DuanP14,HopcroftKarp1973,MuchaSankowski2004,Madry2013,Madry2016}. The online versions of the problem, where input is revealed one item at a time with the restriction of irrevocable decisions, have received a lot of attention in recent years due to applications in online advertising and algorithmic game theory (see the excellent survey by Mehta~\cite{mehta2013online} and references therein). The performance of an online algorithm $ALG$ is measured by its \emph{competitive ratio}, which is analogous to the notion of approximation ratio in the offline setting. The algorithm for a maximization problem is said to have \emph{asymptotic competitive ratio} at most $\rho$ if there is a constant $c \in \mathbb{R}$ such that for all inputs $I$ we have $ALG(I) \ge \rho \cdot OPT(I) + c$, where $OPT(I)$ denotes the value of the offline optimal solution on the input $I$. If the constant $c$ is zero then we say that the competitive ratio is \emph{strict}. The online maximum matching problem has been studied in a variety of settings, including adversarial~\cite{Karp1990Optimal,BirnbaumM2008,DevanurJK2013}, stochastic~\cite{bahmani2010improved,brubach2016new,Feldman2009Online,haeupler2011online,jaillet2014online,manshadi2012online}, and advice~\cite{durr2016power}. The advice setting is also the focus of the present paper, so we shall review it next. 

In the online advice model, the algorithm that receives input items one by one is cooperating with an all-powerful oracle, which has access to the entire input in advance. The oracle can communicate information to the algorithm by writing on an infinite tape, which it populates with an infinite binary string before the algorithm starts its execution. At any point during its runtime, the online algorithm can decide to read one or more bits from the tape. The worst-case number of bits read by the algorithm on an input of length $n$ is its advice complexity, as a function of input length $n$. The advice can be viewed as a generalization of randomness. A randomized algorithm is given random bits (independent of the input instance), whereas an advice algorithm is given advice bits prior to processing the online input. Note that unlike the standard Turing machine model with advice, in the online setting advice string contents are allowed to depend on the entire input and not just its length. Thus, advice length can be thought of as a measure of how much extra information about the input the online algorithm needs in order to achieve a certain competitive ratio. For more information about the online advice complexity see the excellent survey by Boyar et al.~\cite{boyar2016online}. This setting is important not only from the theoretical point of view but also from a practical one, as advice-based algorithms can often lead to efficient offline algorithms when advice is efficiently computable~\cite{borodin2019conceptually}. In addition, recently a new research direction has gained a lot of momentum, where the feasibility of using advice obtained by machine learning techniques in conjunction with online algorithms is being assessed (see~\cite{chen2021online,alomrani2021deep} for such results related to bipartite matching and \cite{angelopoulos2019online} references therein for a general theoretical framework).

In a variety of stochastic settings, algorithms for the general abstract online bipartite matching face the competitive ratio barrier of $1-1/e$. This is a consequence of a simple observation that when one throws $n$ balls into $n$ bins at random there will be roughly $n/e$ collisions, and this situation is often embedded in the online bipartite matching. In fact, the tight worst-case competitive ratio in the adversarial randomized setting is exactly $1-1/e$~\cite{Karp1990Optimal}. This competitive ratio is achieved by an algorithm that uses $\Theta(n \log n)$ random bits. Miyazaki~\cite{miyazaki2014advice} showed that in order to achieve optimality with advice, one needs $\Omega(n \log n)$ bits of advice. However, if one is satisfied with getting near optimality, D\"{u}rr et al.~\cite{durr2016power} showed that one can get the competitive ratio $(1-\epsilon)n$ using only $\Theta_\epsilon(n)$ bits of advice, where $\Theta_\epsilon$ hides constants depending on $\epsilon$.

Considerations in image processing~\cite{cohen1999finding} and circuit board design~\cite{hershberger1997efficient}, among other applications, motivate the study of matching problems in geometric graphs. In the geometric setting restricted to $2$ dimensions, one is led to consider various shapes in the Euclidean plane, such as circles, regular polygons, convex polygons, points, etc. Such shapes serve as vertices of the graph, on which the matching problem is defined. The adjacency can be defined by geometric considerations as well, such as two shapes intersecting each other, or shapes being reachable from one another via a curve in the plane that avoids all other shapes. The geometric setting motivates introducing new constraints to the problem such as non-crossing edges, which are important for the applications, such as circuit board design. The geometric settings are numerous and not as well understood as the general abstract setting (see the survey by Kano and Urrutia~\cite{kano2021discrete}), since introducing new constraints can change the complexity of the problem significantly resulting in some variants of the problem being $\mathcal{NP}$-hard~\cite{aloupis2013non}. As mentioned earlier, in this paper we study the geometric version of matching in $2$ dimensions, where geometric objects of interest are points and they can be matched with each other via non-crossing straight line segments. Observe that for our matching problems of interest we have $OPT(I) = |I|$ as discussed earlier, so the competitive ratio coincides with the fraction of matched points that the algorithm can guarantee in the worst case. The bichromatic version of the problem was first defined by Atallah who gave the $O(n \log^2 n)$ time deterministic offline algorithm for it~\cite{atallah1985matching}. Dumitrescu and Steiger~\cite{dumitrescu2000matching} generalized the problem and gave efficient approximation algorithms. The online versions of both monochromatic and bichromatic matching in the plane were introduced and studied by Bose et al.~\cite{bose2020non} and additional results are found in Sajadpour's Master's thesis~\cite{sajadpour2021non}. The main results from the work done on online MNM and BNM prior to our work can be summarized as follows (recall that the input length is $2n$ rather than $n$ in BNM and MNM):


\begin{itemize}
    \item a tight bound of $2/3$ on the deterministic competitive ratio for MNM, and a tight bound of $\log n - o(\log n)$ on the maximum worst-case number of matched points for BNM \cite{bose2020non};
    \item an upper bound of $2\log 3 n \approx 3.17 n$ and a lower bound of $\Omega(\log n)$ on the advice length to achieve optimality for MNM \cite{bose2020non};
    \item tight bound of $\Theta(n \log n)$ on the advice length to achieve optimality for BNM \cite{bose2020non} (note that the proof of the lower bound has a mistake);
    \item upper bound of $0.3304$ and lower bound of $0.0738$ on the competitive ratio of a randomized algorithm for MNM \cite{sajadpour2021non};
    \item lower bound of $\Omega(n)$ on the number of points that can be matched by a randomized algorithm in the worst case for BNM \cite{sajadpour2021non}.
\end{itemize}

Overall, the above set of results paints a picture that the bichromatic version is significantly more difficult than the monochromatic version of the problem, which is contrary to the offline setting, in which the bipartite setting is typically easier than the general setting. The above results indicate that the situation stays the same even in the presence of  advice. A significant special case of the BNM and MNM problems is when input points are located at the same distance from the origin, i.e., they are all located on a common circle. First of all, the known lower bounds are based on such inputs, and second of all, designing algorithms for this special case may be easier and often serves as the first step towards the algorithm for the general case. We refer to this special case as ``BNM/MNM on a circle'', and we say ``BNM/MNM on a plane'' to emphasize the general case without the common circle restriction. 

Our contributions are as follows. We observe that the lower bound of Bose et al. of $\Omega(n \log n)$ on the advice length to achieve optimality for BNM contains a bug. We give a new argument to establish the lower bound of $\log C_n \sim 2n - \frac{3}{2} \log n$, where $C_n$ is the $n^\text{th}$ Catalan number. The lower bound of Bose et al., as well as our new lower bound use input points that are located on a circle. We present an algorithm that uses $\log C_n$ bits of advice to solve BNM optimally on a circle. Thus, we completely resolve the advice complexity of BNM on a circle. We also show a lower bound of $n/3-1$ and an upper bound of $3n$ on the advice complexity for MNM on a plane. This gives an exponential improvement over the previously known lower bound and an improvement in the constant of the leading term in the upper bound. In addition, we establish
$\frac{\alpha}{2}\infdiv{\frac{2(1-\alpha)}{\alpha}}{1/4}n$ lower bound on the advice complexity to achieve competitive ratio $\alpha \in (16/17,1)$ for MNM on a circle where $\infdiv{p}{q}$ is the relative entropy between two Bernoulli random variables with parameters $p$ and $q$. The results are summarized in Table~\ref{tab:results}.

\begin{table}[h]
    \centering
    \begin{tabular}{|c||c|c||c|c|}
    \hline
        Problem Version&Previous LB & New LB  & Previous UB & New UB \\
        & &  (this work) & &  (this work)\\
        \hline
        MNM, Circle & $\Omega(\log n)$ & $n/3-1$ & $2n\log 3+o(n)$ & $\log C_n$\\
        \hline 
        MNM, Plane & $\Omega(\log n)$  & $n/3-1$ & $2n\log 3+o(n)$ &  $3n$\\
        \hline
        MNM, $\rho=\alpha$ & - & $\frac{\alpha}{2}\infdiv{\frac{2(1-\alpha)}{\alpha}}{1/4}n$ & -  & -\\
        \hline
        BNM, Circle & $n \log n$ (Mistake) & $\log C_n$& $n\log n$  & $\log C_n$\\
        \hline 
        BNM, Plane & $n \log n$ (Mistake) & $\log C_n$ & $n\log n$   &  - \\
        \hline
    \end{tabular}
    \caption{Summary of previously known results and our new results for the advice complexity of BNM and MNM.}
    \label{tab:results}
\end{table}

In terms of conceptual contributions, our work shows that the previously held belief that BNM is more difficult than MNM even in the presence of advice is no longer justified. Indeed, it might still turn out that BNM requires asymptotically more bits of advice than MNM to achieve optimality, but this would require new lower bound constructions and arguments where input points are not located on the same circle. Whether the advice complexity of BNM in the plane is $\omega(n)$ or $O(n)$ is left as an important open problem. Our lower bound arguments also provide a conceptual contribution to the advice complexity area, since MNM and BNM have an interesting phenomenon of complicated evolution of constraints during the runtime of the algorithm (because of the non-crossing condition). This makes it difficult (or perhaps even impossible) to use the standard tools from advice complexity, such as a reduction from the string guessing problem~\cite{bockenhauer2013} or partition trees~\cite{barhum2014power}. Thus, our lower bound arguments are based on the first principles. In the BNM case, we establish connections between BNM on the circle and a particular structured subset of the permutation group that is in one-to-one correspondence with full binary trees. Our further investigations of the BNM problem hint at deep connections between the theory of permutations and the BNM problem, which will perhaps be encountered when one would try to get tight bounds for the plane version of the problem. Our lower bound for the advice complexity of approximating MNM is based on probabilistic arguments and the nemesis input sequence is defined by a Markov chain that can successfully fool any online algorithm.

The rest of the paper is organized as follows. Preliminaries are presented in Section~\ref{sec:prelim}. We discuss the bug in the paper of Bose et al. in Section~\ref{sec:bichromatic}. We present our new lower bound argument and the upper bound for BNM on a circle in that section as well. In Section~\ref{sec:monochromatic} we present our results for the monochromatic version including the exponential improvement of the previously known lower bound, a slight improvement on the upper bound for achieving optimality, as well as a new lower bound to approximate MNM. Lastly, we conclude with some discussion and open problems in Section~\ref{sec:conclusion}.

\section{Preliminaries}
\label{sec:prelim}
\subsection{Online Non-Crossing Matching}\label{ssec:onm}
The input to MNM is a sequence of $2n$ points $p_1, p_2, \ldots, p_{2n} \in \mathbb{R}^2$ in general position arriving one-by-one. At step $i$, $p_i$ arrives and the algorithm can decide to match it with one of the unmatched points $p_j$ for some $j<i$ with a straight line segment or  leave it unmatched. The goal of the algorithm is to match all input points without creating any crossings of line segments.

In BNM, the first $n$ points, denoted by $B = (b_1,\ldots, b_n)$, are blue and can be thought of as given in advance. The next $n$ points, denoted by $R=(r_1,\ldots,r_n)$, are red and arrive one-by-one. The algorithm has to match red points (online) upon their arrival to the blue points (offline) without creating any crossings. Note that any algorithm for BNM can also be used for MNM by treating the first $n$ points $p_1, \ldots, p_n$ as blue and the next $n$ points $p_{n+1}, \ldots, p_{2n}$ as red. Consequently, an upper bound (on competitive ratio and/or advice complexity) for BNM implies the same upper bound for MNM, and a lower bound for MNM implies a lower bound for BNM. Many of our algorithms that achieve good performance on a circle also work on slightly more general inputs, namely, when input points are vertices of their convex hull. When this happens we say that points are in a ``convex position''.

Suppose that the input to MNM $I = (p_1, \ldots, p_{2n})$ is in convex position. Order the points in $I$ clockwise starting with $p_1$ in the convex hull of $I$. Let $j_i$ denote the index of point $p_i$ in this order with $j_1 = 0$. 
The \textit{parity} of $p_i$ is denoted by $\chi(p_i):= j_i \mod 2$. Therefore $\chi(p_1)=0$, the parity of clockwise neighbor of $p_1$ in the convex hull is $1$ and so forth. See Figure~\ref{fig:parity_example} for an illustration.

\begin{figure}[h]
    \centering
    \includegraphics[scale=0.8]{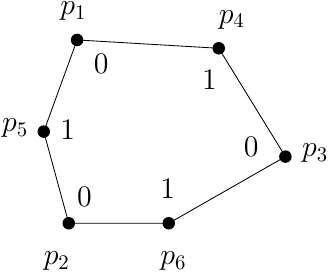}
    \caption{An instance of MNM in convex position with $6$ points with their convex hull and parities.}
    \label{fig:parity_example}
\end{figure}



\subsection{Online Algorithms with Advice}
\label{subsection:advice}
In this subsection, we briefly discuss two different models of algorithms with advice: the advice tape model and the parallel algorithms model. For completeness, we present a known argument demonstrating the equivalence (up to lower order terms) of two models. Thus, in the rest of the paper, we shall use two models interchangeably.

\textbf{Advice tape model.} 
In this model, there is an all-powerful oracle that sees the entire input in advance and populates an infinite advice tape with bits. Then the online algorithm $ALG$ receives the input one item at a time and $ALG$ has the ability to read any number of bits of advice from the tape at any point during the runtime. The oracle is trustworthy and always behaves in the best interests of the online algorithm according to a pre-agreed protocol. Thus, the oracle and the algorithm co-operate to achieve the best possible performance against an adversary. The worst-case number of advice bits read by the online algorithm with advice on inputs of length $n$ is the \emph{advice complexity} of the algorithm (as a function of $n$). The advice complexity of $ALG$ is denoted by $a(ALG, n)$. 

\textbf{Parallel online algorithms model.} In this model, a set of deterministic online algorithms (without advice) $\mathcal{A}$ is said to solve a set of inputs $\mathcal{I}$ if for every input sequence $I\in \mathcal{I}$ there is at least one algorithm $ALG\in \mathcal{A}$ such that $ALG$ solves $I$ optimally (or within the given level of approximation). We define the \emph{width} of $\mathcal{A}$ as $\lceil \log |\mathcal{A}| \rceil)$, denoted by $w(\mathcal{A})$.


\begin{fact}[Equivalence of the two models]
\label{fact:advice_models}
The advice tape model and the parallel online algorithms model are equivalent (up to additive $O(\log n)$ terms) in the following sense:

\begin{enumerate}
    \item If $ALG$ is an algorithm in the advice tape model that solves inputs of length $n$ correctly then there is a set $\mathcal{A}$ of algorithms in the parallel online algorithms model that solves the same inputs  correctly and such that $w(\mathcal{A}) \le a(ALG, n)$.
    \item If $\mathcal{A}$ is a set of algorithms that solves a set of inputs of length $n$ correctly in the parallel online algorithms model then there exists an algorithm $ALG$ in the tape model with advice that solves the same inputs correctly and such that $a(ALG, n) \le w(\mathcal{A}) + O(\log w(\mathcal{A}))$.
\end{enumerate}
\end{fact}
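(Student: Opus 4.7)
The plan is to prove the two directions separately; the first is essentially an unfolding/counting argument, while the second is a straightforward encoding argument with one technicality coming from the need to self-delimit the advice.

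For part (1), let $k = a(ALG, n)$. For each string $\sigma \in \{0,1\}^k$ I would define a deterministic online algorithm $ALG_\sigma$ that simulates $ALG$ while reading its advice bits from $\sigma$ (extended arbitrarily past position $k$, though by assumption $ALG$ never reads past position $k$ on inputs of length $n$). Set $\mathcal{A} = \{ALG_\sigma : \sigma \in \{0,1\}^k\}$. For any input $I$ of length $n$, let $\sigma^*$ be the length-$k$ prefix of the advice tape that the oracle writes for $I$; then $ALG_{\sigma^*} \in \mathcal{A}$ solves $I$ correctly by construction. Since $|\mathcal{A}| \le 2^k$, we get $w(\mathcal{A}) = \lceil \log |\mathcal{A}|\rceil \le k = a(ALG, n)$.

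For part (2), let $k = w(\mathcal{A})$ and enumerate $\mathcal{A} = \{A_1, \ldots, A_N\}$ with $N \le 2^k$. For each input $I$ the oracle picks some index $i(I) \in \{1, \ldots, N\}$ such that $A_{i(I)}$ solves $I$ correctly, and it needs to transmit $i(I)$ over the advice tape. The naive ``$k$-bit binary'' encoding suffices only if $ALG$ already knows $k$ and therefore knows when to stop reading. To remove this assumption, the oracle first writes a self-delimiting encoding of $k$ on the tape, for instance the Elias $\gamma$-code of $k$, which occupies $2\lfloor \log k\rfloor + 1 = O(\log k)$ bits, and then appends the $k$-bit binary encoding of $i(I)$. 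The online algorithm reads the prefix code to recover $k$, reads the next $k$ bits to recover $i(I)$, and then simulates $A_{i(I)}$ deterministically on the input stream with no further advice reads. The total number of advice bits consumed is $k + O(\log k) = w(\mathcal{A}) + O(\log w(\mathcal{A}))$.

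The only substantive point I expect to be the obstacle is the self-delimiting encoding in part (2): without it one cannot tell a length-$k$ index from a shorter index followed by unrelated bits. If one is willing to assume that $n$ (and hence a function-of-$n$ upper bound on $w(\mathcal{A})$) is known to $ALG$ a priori, then the additive $O(\log w(\mathcal{A}))$ overhead vanishes and we would match the cleanest form of this folklore equivalence; the statement as given is the ``safe'' version that makes no such assumption. Any standard prefix-free encoding of the positive integers ($\gamma$-code, $\delta$-code, Levenshtein code) yields the same asymptotics, so the specific choice is unimportant.
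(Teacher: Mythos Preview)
Your proof is correct and follows essentially the same approach as the paper: for part (1) you both enumerate the $2^{a(ALG,n)}$ possible advice prefixes to obtain the family $\mathcal{A}$, and for part (2) you both use a prefix-free integer code (the paper uses Elias delta directly on the index, you use Elias gamma on $k$ followed by a $k$-bit index, which is equivalent up to constants) to achieve the $w(\mathcal{A}) + O(\log w(\mathcal{A}))$ bound. Your discussion of why self-delimiting is needed is slightly more detailed than the paper's, but the argument is the same.
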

\begin{proof}
\hspace{1cm}

\begin{enumerate}
    \item For $x \in \{0,1\}^{a(ALG, n)}$ we define $ALG_x$ to be the algorithm $ALG$ with the first $a(ALG, n)$ bits on the advice tape fixed to be $x$ and restricted to inputs of length $n$. Then setting $\mathcal{A} = \{ALG_x\}_{x \in \{0,1\}^{a(ALG, n)}}$ establishes the claim.
    \item The oracle and the online algorithm $ALG$ agree on the ordering of $\mathcal{A}$. The oracle uses Elias delta coding scheme to encode the index of the first algorithm from $\mathcal{A}$ that solves the given input sequence correctly. This requires $w(\mathcal{A}) + O(\log w(\mathcal{A}))$ bits of advice, the rest of the bits are arbitrary. $ALG$ starts by reading and decoding the index of an algorithm from $\mathcal{A}$ written on the advice tape, and then runs that algorithm.
\end{enumerate}
\end{proof}

Since $O(\log w(\mathcal{A}))$ additive terms are small in our setting, we shall use the two models interchangeably and we shall sometimes refer to $w(\mathcal{A})$ as the advice complexity. Analogous statements and considerations hold for approximation.

\subsection{Catalan Numbers and Related Topics}
Catalan numbers \cite{oeis2021catalan}, denoted by $C_n$, are defined recursively as 
\[ C_n = \left\{
\begin{array}{ll}
1 & n = 0\\
\sum_{i=0}^{n-1} C_i C_{n-i-1} & n \ge 1
\end{array} \right..\]
The closed-form expression in terms of binomial coefficients for Catalan numbers is
$ C_n = \frac{1}{n+1} {2n \choose n}.$
By Stirling's approximation, we get 
$ \log C_n \sim 2n - \frac{3}{2} \log n.$ All logs are to the base $2$ unless stated otherwise.

The following three combinatorial objects are used in this paper extensively:
\begin{enumerate}
    \item A binary sequence $B=\{ b_1,...,b_{2n}\}$ is a \emph{Dyck word} if it has the equal number of $0$'s and $1$'s and in every prefix of it there are no more $1$'s  than $0$'s.
    \item A rooted ordered binary tree is a tree with a dedicated root vertex. Each vertex of such a tree has a left subtree and a right subtree (either of which can be empty). 
    \item A permutation of integers $\{1, 2, \ldots, n\}$ is a reordering of these integers. We represent a permutation $\sigma$ as a sequence $\sigma = (\sigma_1, \sigma_2, \ldots, \sigma_n)$ where $\sigma_i$ is the integer in the $i^\text{th}$ position according to $\sigma$. We say that $\sigma$ has a $231$ pattern if there exist  indices $i<j<k$ such that $\sigma_k<\sigma_i<\sigma_j$. If $\sigma$ does not have any $231$ pattern, we say $\sigma$ is \emph{$231$-avoiding}. For instance, $14235$ is $231$-avoiding but $31542$ is not since the subsequence $352$ is a $231$ pattern.
\end{enumerate}



Catalan numbers have been discovered in a variety of different contexts, so they have many alternative definitions. In particular, the above three definitions of combinatorial objects can be used to define Catalan numbers. An interested reader is referred to \cite{oeis2021catalan} and references therein.
\begin{fact}
\label{fact:catalan}
The number of Dyck words of length $2n$, the number of ordered rooted binary trees with $n$ vertices, and the number of $231$-avoiding permutations of $[n]$ are all equal to $C_n$.
\end{fact}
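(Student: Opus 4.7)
The plan is to show that all three counting sequences satisfy the Catalan recursion $a_n = \sum_{i=0}^{n-1} a_i a_{n-1-i}$ with $a_0 = 1$, from which equality with $C_n$ follows immediately by induction on $n$.

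For Dyck words of length $2n$, I would use the standard first-return decomposition. A nonempty Dyck word $w$ begins with $0$; let $2k$ be the smallest positive index at which $w_1 \cdots w_{2k}$ contains equally many $0$'s and $1$'s. Then $w_{2k} = 1$, and $w$ splits uniquely as $0\,u\,1\,v$, where $u$ is a Dyck word of length $2(k-1)$ and $v$ is a Dyck word of length $2(n-k)$. This decomposition is a bijection, and summing over $k \in \{1,\ldots,n\}$ yields the recursion. For rooted ordered binary trees with $n$ vertices, the argument is even more immediate: a nonempty tree is uniquely specified by its root together with an ordered pair consisting of a left subtree of size $i$ and a right subtree of size $n-1-i$, so summing over $i \in \{0,\ldots,n-1\}$ gives the recursion.

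The most interesting case is $231$-avoiding permutations, where I would condition on the position $k$ of the largest value $n$. The key observation is that if $i < k < j$ and $\sigma_i > \sigma_j$, then the triple $(\sigma_i, n, \sigma_j)$ forms a $231$ pattern; hence every value appearing before position $k$ must be smaller than every value appearing after position $k$. This forces positions $1,\ldots,k-1$ to carry precisely $\{1,\ldots,k-1\}$ and positions $k+1,\ldots,n$ to carry precisely $\{k,\ldots,n-1\}$. The restrictions to each side are themselves $231$-avoiding permutations (of $[k-1]$ and, after the order-preserving relabeling $x \mapsto x - k + 1$, of $[n-k]$), and conversely any pair of such permutations lifts to a $231$-avoiding permutation of $[n]$. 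Summing over $k$ gives the recursion.

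Once all three recursions are verified, the fact follows because each sequence matches the defining recurrence and initial value of $C_n$. The only real subtlety is the $231$-avoiding step, where one must argue carefully that the forbidden pattern centered at $n$ is precisely what separates the permutation into two independent halves; the Dyck-word and binary-tree parts are routine structural decompositions.
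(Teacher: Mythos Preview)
Your argument is correct and is the standard textbook route: verify the Catalan recursion for each of the three families via the natural decomposition (first return for Dyck words, root/subtree split for binary trees, position of the maximum for $231$-avoiding permutations). The converse direction for $231$-avoiding permutations---that gluing two $231$-avoiding blocks with $n$ in between yields a $231$-avoiding permutation---deserves one more sentence, but it follows easily since any putative $231$ pattern straddling position $k$ would require a value left of $k$ to exceed a value right of $k$, contradicting the block structure.

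As for comparison with the paper: the paper does not prove this statement at all. It is recorded as a \emph{Fact} in the preliminaries and treated as background combinatorics (it is indeed classical; see, e.g., Stanley's catalogue of Catalan objects). So your proposal supplies strictly more than the paper does here, and there is nothing to compare against.
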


\section{Advice Complexity of Convex BNM}
\label{sec:bichromatic}

Bose et al. \cite{bose2020non} gave an argument that at least $\lceil \log n! \rceil \sim n \log n$ bits of advice are needed to solve BNM with $2n$ points optimally. Their argument is based on a family of input sequences in a convex position. Unfortunately, the proof contains a mistake and it is not possible to fix that mistake to restore the original bound, since there is an algorithm achieving asymptotically much better advice complexity of $\lceil \log C_n \rceil \sim 2n - \frac{3}{2} \log n$ for inputs in convex position. In this section, we begin by presenting this algorithm. The pseudocode is shown in the Algorithm \ref{alg:BTMatching}below. After that we show that $\lceil \log C_n \rceil$ bits of advice are necessary to achieve optimality for inputs on a circle, thus, establishing the tightness of $\lceil \log C_n \rceil$ bound. 
After presenting these arguments, we explain the mistake in the proof of Bose et al.


$BTMatching$ algorithm works on an input sequence of BNM as follows. Knowing the input sequence in which all blue points $B$ arrive before all red points $R$, the oracle finds a perfect non-crossing matching $M \subseteq R \times B$ between red and blue points. It is easy to see that such a matching always exists witnessed by a perfect matching of minimum total length, similarly to what we discussed in the introduction for the monochromatic case. The oracle then creates a tree representation $T$ of $M$ that takes into account the order in which red points arrive so that the online algorithm can later recover $M$ from $T$ on the fly. The tree is constructed recursively. Take the first red point $r_1$ and let $b_i$ denote the blue point to which $r_1$ is matched in $M$. The vector $b_i - r_1$ splits the plane into two half-planes: the left half-plane consists of all points $p$ such that $p - r_1$ forms an angle between $0$ and $\pi$ counter-clockwise with $b_i - r_1$, the right half-plane consists of all points $p$ such that $p - r_1$ forms an angle between $\pi$ and $2\pi$ counter-clockwise with $b_i - r_1$. Since the points are in convex position and $M$ is a non-crossing matching it follows that each line segment corresponds to matched  points in $M \setminus \{(r_1, b_i)\}$ lies entirely either in the left half-plane or the right-half plane. Thus, we partition $M = \{(r_1, b_i)\} \cup M_L \cup M_R$, where $M_L$ ($M_R$) consists of edges from $M$ that lie entirely in the left (respectively, right) half-plane. A tree $T_L$ ($T_R$) is constructed recursively for $M_L$ (respectively, $M_R$). Then $T$ is constructed by creating a root $t$ with $T_L$ as its left subtree and $T_R$ as its right subtree. Let $\langle T \rangle$ denote the encoding of a tree in binary. Then the oracle writes $\langle T \rangle$ on the tape. 

When the online algorithm is executed, it starts by reading $T$ from the tape. When a red point arrives the algorithm uses $T$ to deduce which blue point it should be matched to according to $M$. Let's first observe how it works for the first red point. When $r_1$ arrives the algorithm knows that this point corresponds to the root of $T$ and therefore it knows that the matching $M$ is such that there are $|M_L| = |T_L|$ edges to the left of the line segment by which $r_1$ is matched to a blue point. Thus, the algorithm can order blue points in clockwise order starting with $r_1$ and deduce that $b_i$ is the $|T_L|+1^\text{st}$ blue point in this order and match $r_1$ with $b_i$. After these points are matched, the problem splits into two independent problems corresponding to $T_L$ and $T_R$, which are located inside the two half-planes induced by the vector $b_i - r_1$, as described above. Thus, when the next red point $r_2$ arrives, the algorithm can determine whether $r_2$ lies in the left half-plane or the right-half plane and apply the same procedure as before with $T_L$ or $T_R$, respectively. And so on.
We use the following notation some of which depends on the current step in the execution of the algorithm: 
\vspace{-0.2cm}
\begin{itemize}
    \item $size(t)$ denotes the number of vertices in the subtree rooted at node $t$; $root(T)$ denotes the root of the (sub)tree $T$; $label(t)$ denotes the label of a node $t$; $left(t)$ ($right(t)$) denotes the left (respectively, right) child of the node $t$;
    \item $B_r$ is the set of currently available blue points such that matching the red point $r$ does not create any crossings with previously matched points;
    \item $C_r$ be the convex hull of the points in $B_r \cup \{r\}$;
    \item $b_r^i$ be the $i^\text{th}$ blue point in clockwise order in $C_r$.
\end{itemize}

\begin{algorithm}[h]
\caption{$BTMatchingOracle$}\label{alg:BTMatching}

\begin{algorithmic}

\Procedure{$BTMatchingOracle$}{$B=(b_1, \ldots, b_n), R = (r_1, \ldots, r_n)$}
\State{$M \gets $ non-crossing perfect matching $\subseteq R \times B$}
\State{$T \gets MatchingToBT(B,R,M)$}
\State{write $\langle T \rangle$ on the tape}
\EndProcedure

\end{algorithmic}
\end{algorithm}

\begin{algorithm}[h]
\caption{$MatchingToBT$}\label{alg:BTMatching2}
\begin{algorithmic}

\Procedure{$MatchingToBT$}{$B=(b_1, \ldots, b_n), R=(r_1, \ldots, r_n), M = \{e_1, \ldots, e_n\}$}
\If{$n=1$}
    \State{\textbf{return} $T$ consisting of a single node}
\EndIf
\State{let $i$ be such that $(r_1, b_i) \in M$}
\State{$B_L, R_L, M_L, B_R, R_R, M_R \gets \emptyset$}
\For{$j = 1$ to $n$}:
    \If{$j \neq i$}
        \If{$b_j$ is in the left half-plane induced by $(r_1,b_i)$}
            \State{$B_L.append(b_j)$}
        \Else
            \State{$B_R.append(b_j)$}
        \EndIf
    \EndIf
    \If{$j \neq 1$}
        \If{$r_j$ is in the left half-plane induced by $(r_1,b_i)$}
            \State{$R_L.append(r_j)$}
        \Else
            \State{$R_R.append(r_j)$}
        \EndIf
    \EndIf
    \If{$e_j \neq (r_1,b_i)$}
        \If{$e_j$ is in the left half-plane induced by $(r_1,b_i)$}
            \State{$M_L \gets M_L \cup \{e_j\}$}
        \Else
            \State{$M_R \gets M_R \cup \{e_j\}$}
        \EndIf
    \EndIf
\EndFor
\State {let $T_L=MatchingToBT(B_L, R_L, M_L)$}
\State {let $T_R=MatchingToBT(B_R, R_R, M_R)$}
\State {let $T$ be a tree with root $t$}
\State {$left(t) \gets root(T_L)$}
\State {$right(t) \gets root(T_R)$}
\State{\textbf{return} $T$}
\EndProcedure
\end{algorithmic}
\end{algorithm}


\begin{algorithm}[h]
\caption{$BTMatchingAlgorithm$}\label{alg:BTMatching3}

\begin{algorithmic}

\Procedure{$BTMatchingAlgorithm$}{}
\State{receive all blue points $B = (b_1, \ldots, b_n)$}
\State{read $T$ from the tape}
\For{every node $t \in T$}
    \State{$label(t) \gets \emptyset$}
\EndFor
\For{$i = 1$ to $n$}
    \State{receive $r_i$}
    \State{$t \gets root(T)$}
    \While{$label(t) \neq \emptyset$}
        \If{$r_i$ is on the left side of $label(t)$}
            \State{$t \gets left(t)$}
        \Else
            \State{$t \gets right(t)$}
        \EndIf
    \EndWhile
    \State{$k \gets size(left(t))+1$}
    \State{match $r$ to $b_r^k$}
    \State{$label(t) \gets (r, b_r^k)$}
\EndFor
\EndProcedure
\end{algorithmic}
\end{algorithm}

In the pseudocode, the procedure $BTMatchingOracle$ describes how the oracle operates. It uses $MatchingToBT$ subroutine to convert $M$ into a binary tree $T$. The procedure $BTMatchingAlgorithm$ describes how the online algorithm operates. As the online algorithm constructs the matching it labels the nodes by the edges of the matching constructed so far.

\begin{theorem}
$BTMatching$ (see Algorithms~\ref{alg:BTMatching}-\ref{alg:BTMatching3}) solves BNM with $n$ red and $n$ blue points in convex position optimally with $\lceil \log C_n \rceil$ bits of advice. 
\end{theorem}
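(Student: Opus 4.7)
The plan is to prove correctness by induction, relying on an invariant maintained throughout the execution, and then bound the advice length using the bijection between ordered rooted binary trees on $n$ nodes and $[C_n]$.

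First I would establish the following invariant: immediately before the algorithm processes red point $r_i$, the node $t \in T$ at which the while-loop in $BTMatchingAlgorithm$ terminates is precisely the node that the oracle's recursive call to $MatchingToBT$ associated with $r_i$. I would prove this by induction on $i$. The base case $i = 1$ is immediate since the while-loop exits at the root, and this is the node $MatchingToBT$ associated with $r_1$. For the inductive step, when the oracle matches $(r_1, b_i)$ at the root of some recursive subproblem, the segment splits the corresponding convex cell into two cells, and by convex position every edge of $M_L$ lies strictly in the left cell while every edge of $M_R$ lies strictly in the right cell. Hence the two recursive subproblems are geometrically independent, and since the algorithm walks down $T$ using exactly the same half-plane test against already-labeled ancestors, it ends up in whichever subtree the oracle placed $r_i$ in.

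Next I would show that, under this invariant, the set $B_r$ of currently unmatched blue points that do not create a crossing with $r_i$ equals the set of blue endpoints of edges in the subtree of $T$ rooted at $t$, and that $B_r \cup \{r_i\}$ is in convex position. Every previously matched edge either lies strictly outside the cell currently containing $r_i$ (and hence imposes no restriction) or is an ancestor edge bounding this cell (whose endpoints are already matched and past which no unmatched blue point lies on the ``wrong'' side). It then remains to verify that the blue point matched to $r_i$ in $M$ is exactly $b_r^{size(left(t))+1}$: by the recursive construction of $MatchingToBT$, there are $size(left(t))$ blue endpoints of $M_L$ in the left half-plane of the target segment, and by convex position these are precisely the first $size(left(t))$ blue points encountered in the clockwise scan of $C_r$ starting from $r_i$, so the partner of $r_i$ in $M$ is the next blue point in clockwise order, which is what the algorithm picks.

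For the advice bound, by Fact~\ref{fact:catalan} the number of rooted ordered binary trees with $n$ vertices is $C_n$, so the oracle and the algorithm can agree in advance on a fixed bijection between such trees and $\{0, 1, \ldots, C_n - 1\}$. Since the algorithm knows $n$, the oracle writes the index of $T$ on the tape using exactly $\lceil \log C_n \rceil$ bits, and the algorithm reads precisely that many bits to recover $T$ before any red point arrives. The main obstacle is the convex-position step underlying both invariants: one must carefully justify that after several edges have been placed, the ``left vs right of the matched segment'' partition continues to simultaneously respect the recursive tree structure (so the while-loop lands on the correct node of $T$) and the clockwise ordering around the current red point (so that $b_r^k$ indexes the right target). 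Convex position is precisely what guarantees both conditions; in particular it prevents future edges from crossing an ancestor segment and thereby confusing either the tree descent or the clockwise enumeration of $B_r$.
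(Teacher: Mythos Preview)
Your proposal is correct and rests on the same two ingredients the paper uses: convex position forces each matched segment to cleanly split the remaining instance into two independent subinstances, and the tree $T$ records exactly this recursive decomposition, so $\lceil \log C_n \rceil$ bits suffice. The only notable difference is the inductive organization. The paper inducts on $n$: after $r_1$ is matched to $b_{|T_L|+1}$, the instance decomposes into two smaller BNM instances on $(B_L,R_L,T_L)$ and $(B_R,R_R,T_R)$, and one checks that the algorithm's behavior on the full input, restricted to either side, coincides with running the algorithm on that side alone; the inductive hypothesis then finishes both sides at once. You instead induct on the arrival index $i$, carrying the invariant that the while-loop halts at precisely the tree node the oracle assigned to $r_i$, and you separately verify that $B_{r_i}$ coincides with the blue endpoints in that subtree. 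Your approach is more operational and makes the role of the ancestor labels in the while-loop explicit, at the cost of having to argue about the evolving set $B_r$; the paper's structural induction is terser because it never tracks $B_r$ directly but simply observes that the two halves are independent smaller instances. Both arguments are complete once one appeals to convexity for the non-crossing split, which you correctly identify as the crux.
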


\begin{proof} First, we justify the advice complexity of our algorithm. Let $\mathcal{T}_n$ be the set of all ordered unlabeled rooted binary trees with $n$ nodes. By Fact \ref{fact:catalan} we know $|\mathcal{T}_n|=C_n$. The oracle and the algorithm agree on an ordering of $\mathcal{T}_n$ and the bits $\langle T \rangle$ written by the oracle on the tape encode the index of $T$ according to this pre-agreed ordering. Thus the oracle can specify $T \in \mathcal{T}_n$ with $\lceil \log C_n \rceil$ bits of advice. Observe that the oracle does not need to use prefix-free code or specify $n$ separately, since the algorithm knows $n$ after receiving the blue points. Thus, the algorithm can deduce $C_n$ and read the first $\lceil \log C_n \rceil$ bits of advice from the tape after receiving the blue points. 
 
 From the description of the algorithm preceding the theorem, the correctness of the algorithm should be clear. We provide a brief argument by induction for completeness. Let $M'$ be the matching that is created by $BTMatching$ in the online phase. By induction on $n$, we prove that $M'$ is the same as the offline matching $M$.  The base case $n=1$ is trivial. Consider next some $n \ge 2$. Let $t$ be the root of $T$. Suppose $r_1$ is matched with $b_i$ in $M$. Let $B_L$ and $B_R$ be the blue points that are one left and right side of $e=(r_1,b_i)$ respectively and similarly define $R_L$ and $R_R$. Let $M_L$ be the matching between $B_L$ and $R_L$ and define $M_R$ accordingly. Since all the points are in convex position and edges of $M$ are non-crossing straight lines, there is no edge between the left and the right side of $(r_1,b_i)$, thus $M_L=B_L$ and the size of the left subtree of $t$ is $B_L$. In the online matching, $r_1$ will be matched to $b_{B_L+1}$ which is the same blue point as in $M$ since there are $B_L$ blue point on the left side of $(r_1,b_{B_L+1})$.

Let $T_L$ and $T_R$ be left and right subtrees of $t$ respectively. Subtrees $T_L$ and $T_R$ are created by edges in $M_L$ and $M_R$ respectively. Let $M'_L$ be the result of the algorithm with $B_L, R_L$ and $T_L$ as input. By induction, $M'_L = M_L$ and if we define $M'_R$ similarly, with the same argument $M'_R=M_R$. Note that $M'_L$ and $M'_R$ are obtained by splitting the input sequences into two and running the algorithm twice. In addition, whenever the online algorithm receives a red point on the left (right) side of $(r_1,b_i)$ it looks at $T_L$ ($T_R$). Thus red points on the left (right) side of $(r_1,b_i)$ will be matched with the same blue points as in $M'_L$ ($M'_R$) and we can conclude $M'=M$ which means the matching by the online algorithm is perfect and non-crossing.
\end{proof}


Next, we present a lower bound for BNM. Our lower bound uses input points located on a common circle, so it implies the same lower bound for BNM on inputs in convex position and in general position.

\begin{theorem}
Any online algorithm that solves BNM on a circle with $n$ red and $n$ blue points optimally uses at least $\lceil \log C_n \rceil$ bits of advice. 
\end{theorem}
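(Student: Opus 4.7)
My plan is to prove the lower bound via the parallel online algorithms model (Fact~\ref{fact:advice_models}). I would exhibit a family $\mathcal{I} = \{I_M\}_M$ of $C_n$ input instances on a circle, indexed by the $C_n$ non-crossing perfect bichromatic matchings between $n$ blue and $n$ red points in convex position (equivalently, by 231-avoiding permutations of $[n]$ or ordered rooted binary trees on $n$ nodes via Fact~\ref{fact:catalan}), and show that any single deterministic online algorithm without advice can optimally solve at most one instance in $\mathcal{I}$. By Fact~\ref{fact:advice_models} this forces $|\mathcal{A}| \ge C_n$ and yields the desired advice lower bound of $\lceil \log C_n \rceil$.

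For the construction, I would fix the $n$ blue points at canonical positions on the circle. For each non-crossing perfect matching $M$, the input $I_M$ would be defined by an adversary procedure that places the $i$-th red point $r_i$ at a position depending only on the first $i-1$ prescribed matches $(M(r_1), \ldots, M(r_{i-1}))$, where $M(r_j)$ denotes the blue partner of $r_j$ in $M$. In particular, $r_1$ would be placed at the same canonical position in every $I_M$. Once the edge $(r_1, M(r_1))$ is drawn it splits the remaining available arc into two sub-arcs, and the adversary would recurse on each sub-arc according to the left and right subtrees of the ordered binary tree associated to $M$ (mirroring the recursion used by $BTMatching$). The key property I would need is that $I_M$ admits a unique non-crossing perfect bichromatic matching, namely $M$ itself, so that any algorithm that is optimal on $I_M$ is forced to output exactly $M$; this would be ensured by placing each $r_i$ sufficiently close to $M(r_i)$ inside its sub-arc that any alternative match either crosses an existing edge or separates some future red from all its compatible blue partners.

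The lower bound argument would then go as follows. Fix any deterministic algorithm $A$ and consider the outcome of running $A$ against the adversary; this produces a well-defined matching $M_A$, and by construction the realized input is $I_{M_A}$. Let $M \in \mathcal{I}$ with $M \neq M_A$, and let $i$ be the first step at which $M$ and $M_A$ disagree. Then $(M(r_1), \ldots, M(r_{i-1})) = (M_A(r_1), \ldots, M_A(r_{i-1}))$, so by the defining property of the construction the positions of $r_1, \ldots, r_i$ in $I_M$ coincide with those in $I_{M_A}$. Since $A$ is deterministic, its $i$-th output on $I_M$ equals its $i$-th output on $I_{M_A}$, namely $M_A(r_i) \neq M(r_i)$. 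Because $M$ is the only non-crossing perfect matching on $I_M$, $A$ cannot recover $M$ and therefore fails to be optimal on $I_M$. Hence $A$ is optimal on at most one instance in $\mathcal{I}$, giving $|\mathcal{A}| \ge |\mathcal{I}| = C_n$.

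The main obstacle I anticipate is carrying out the geometric construction so that both required properties hold simultaneously: (i) the position of each $r_i$ depends only on the first $i-1$ prescribed matches of $M$, and (ii) $I_M$ admits a unique non-crossing perfect bichromatic matching. The recursive arc-splitting resembles the decomposition used in the $BTMatching$ upper bound, but the lower bound additionally requires uniqueness in (ii), which restricts how freely red points may be placed. I expect this to work by choosing each $r_i$ in the interior of the sub-arc assigned by the recursion and very close to $M(r_i)$, so that any deviation from the prescribed match at step $i$ is geometrically incompatible with completing a non-crossing perfect matching on the remaining reds and blues.
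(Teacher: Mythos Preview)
Your high-level strategy coincides with the paper's: build $C_n$ BNM instances on a circle with common blue points, indexed by objects counted by $C_n$, such that (i) any two instances whose target matchings agree on the first $i-1$ edges also agree on the positions of $r_1,\ldots,r_i$, and (ii) each instance admits a unique non-crossing perfect matching; then conclude by pigeonhole that $C_n$ deterministic algorithms are needed. The paper indexes by $231$-avoiding permutations where you index by non-crossing matchings (equivalently binary trees), but these are the same objects via Fact~\ref{fact:catalan}.

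Where you diverge is in the resolution of the tension you correctly flag between (i) and (ii). Your proposal to place $r_i$ ``very close to $M(r_i)$'' directly contradicts (i): if the location of $r_i$ encodes $M(r_i)$, then two matchings that first differ at step $i$ will already place $r_i$ at different positions, and a deterministic algorithm can legitimately behave differently on the two inputs at step $i$---so the fooling argument fails exactly where it is needed. This is not cosmetic; it is precisely the pitfall the paper attributes to the earlier argument of Bose et al., where distinct permutations were assumed to yield indistinguishable prefixes without verification.

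The paper's resolution is a geometric trick you are missing: put all blue points on the upper semicircle and all red points on the lower semicircle. With that layout, (ii) is automatic and global---the only non-crossing perfect bichromatic matching sends the $j^\text{th}$ leftmost red to $b_j$---so uniqueness never has to be engineered locally via proximity to $M(r_i)$. Each $r_i$ is then dropped at the midpoint of the $\min(i,\sigma_i)^\text{th}$ current arc on the lower semicircle, and the paper proves that for $231$-avoiding $\sigma$ this midpoint is in fact determined by $\sigma_1,\ldots,\sigma_{i-1}$ alone, yielding (i). Your recursive sub-arc picture is essentially the binary-tree reformulation of this same construction, but without the blue-top/red-bottom separation you do not yet have a correct mechanism for (ii) that is compatible with (i).
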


\begin{proof}
We prove this by creating a family of input sequences $\mathcal{I}_n$ with $n$ red and $n$ blue points on a circle. 
In all input sequences in $\mathcal{I}_n$, blue points $B=(b_1,\ldots,b_n)$ have the exact same positions: they are located on the upper half of the circle and for each $1\leq i \leq n$, $b_i$ is the $i^\text{th}$ blue point from the left. Formally, 
$b_i$ has coordinates $(\cos{\alpha_i},\sin{\alpha_i})$ (Figure \ref{fig:BNM} left), where $\alpha_i= \pi(1- i/(n+1))$.

For each permutation $\sigma=(\sigma_1,\ldots,\sigma_n)$ on $[n]$, we generate the online input sequence $R(\sigma)=(r_1,\ldots,r_n)$ of red points in the lower half of the circle such that $r_i$ is the $\sigma_i^\text{th}$ red point from the left. Formally, let $\sigma_{-1}=0$ and $\sigma_0=n+1$ for the ease of the notation and auxiliary points $r_{-1}$ and $r_{0}$ on $(0,-1)$ and $(0,1)$ respectively. These two points are not in the input sequence and we define them for the ease of describing the input sequence. Next, we generate $(r_1, r_2, \ldots, r_n)$ in order. After generating $k$ points we have $k+1$ arcs between $r_{-1}$ and $r_0$ in counterclockwise order with points $r_{-1}, r_0, \ldots, r_k$ as boundary points between arcs. When we refer to $j^\text{th}$ arc, we mean $j^\text{th}$ arc in this order. To generate $r_i$ let $j = \min(i, \sigma_i)$ and place $r_i$ in the middle of $j^\text{th}$ arc. 
An example of such a construction for $R(2143)$ is illustrated in Figure~\ref{fig:BNM}: prior to $r_1$ being generated there is just one arc between $r_{-1}$ and $r_0$, so $j = \min(i, \sigma_i) = \min(1, 2) = 1$ and $r_1$ is placed in the middle of this one arc. This results in two arcs: one between $r_{-1}$ and $r_1$ and another between $r_1$ and $r_0$. For $i = 2$ we have $j = \min(2, \sigma_2) = (2, 1) = 1$ so we place $r_2$ in the middle of the first arc. For $i = 3$ we have $j = \min(3, \sigma_3) = \min(3,4) = 3$, so we place $r_3$ in the middle of the third arc, that is in the middle of the arc between $r_1$ and $r_0$. For $i = 4$ we have $\min(4, \sigma_4) = \min(4,3) = 3$, so we place $r_4$ in the middle of the third arc, that is in the middle of the arc between $r_1$ and $r_3$. Since input sequences in $\mathcal{I}_n$ only differ in red points we refer to an input sequence by its set of red points.


Let $\sigma$ and $\sigma'$ be two $231$-avoiding permutations over $[n]$ and $R(\sigma)=(r_1,\ldots, r_n)$ and $R(\sigma')=(r'_1,\ldots,r'_n)$ be their corresponding input sequences. Let $i$ be the first index that $\sigma$ and $\sigma'$ are different, i.e. $i = \argmin \{ j \in \{1, \ldots, n\} \mid \sigma_j \neq \sigma'_j \}$. It is clear from the above construction that the first $i-1$ points of $R(\sigma)$ and $R(\sigma')$ will be identical. The $231$-avoiding property implies that $r_i = r'_i$ as well. For $i=1$ the claim is trivial because the first red point in all $I\in \mathcal{I}_n$ goes to $(0,-1)$. For $i>1$, we establish the claim by contradiction. Assume that $r_i \neq r'_i$.  Prior to the placement of the $i^\text{th}$ points the two inputs partitioned the lower half of the circle into identical sequences of arcs. Since points are always placed in the middle of one of the arcs, it follows from $r_i \neq r'_i$ that the points $r_i, r'_i$ were placed in different arcs. Without loss of generality, suppose $r_i$ is placed in the middle of an arc to the right of $r'_i$ and there is a $1\leq j <i$ such that $r_j$ is between $r'_i$ and $r_i$. Since $\sigma_j=\sigma'_j$, the number of red points in $R(\sigma)$ to the left of $r_j$ is equal the number of red points in $R(\sigma')$ to the left of $r'_j$. Thus there should be a $i<k\leq n$ such that $r_k$ will be placed on the left side of $r_j$, i.e. $\sigma_k\leq \sigma_j\leq \sigma_i$. Hence the subsequence $\sigma_j,\sigma_i,\sigma_k$ is a $231$ pattern which is a contradiction.


Note that in an input sequence $I\in\mathcal{I}_n$, for $1\leq i\leq n$, if $r_i$ is the $j^\text{th}$ red point from left, it should be connected to $b_j$ in the unique perfect matching corresponding to $I$. The decision of a deterministic online algorithm is based on the prefix of the input sequence that it has received so far. With the same prefix of input sequences, $r_i$ and $r'_i$ should be matched to $b_{\sigma_i}$ and $b_{\sigma'_i} \neq b_{\sigma_i}$ respectively, therefore one deterministic algorithm cannot solve both $R(\sigma)$ and $R(\sigma')$ optimally and by pigeonhole principle together with Fact \ref{fact:catalan} we need at least $C_n$ deterministic algorithms and at least $\lceil\log C_n\rceil$ bits of advice for solving this problem optimally.
\end{proof}

\begin{figure}[h]
    \centering
    \includegraphics[scale=0.1]{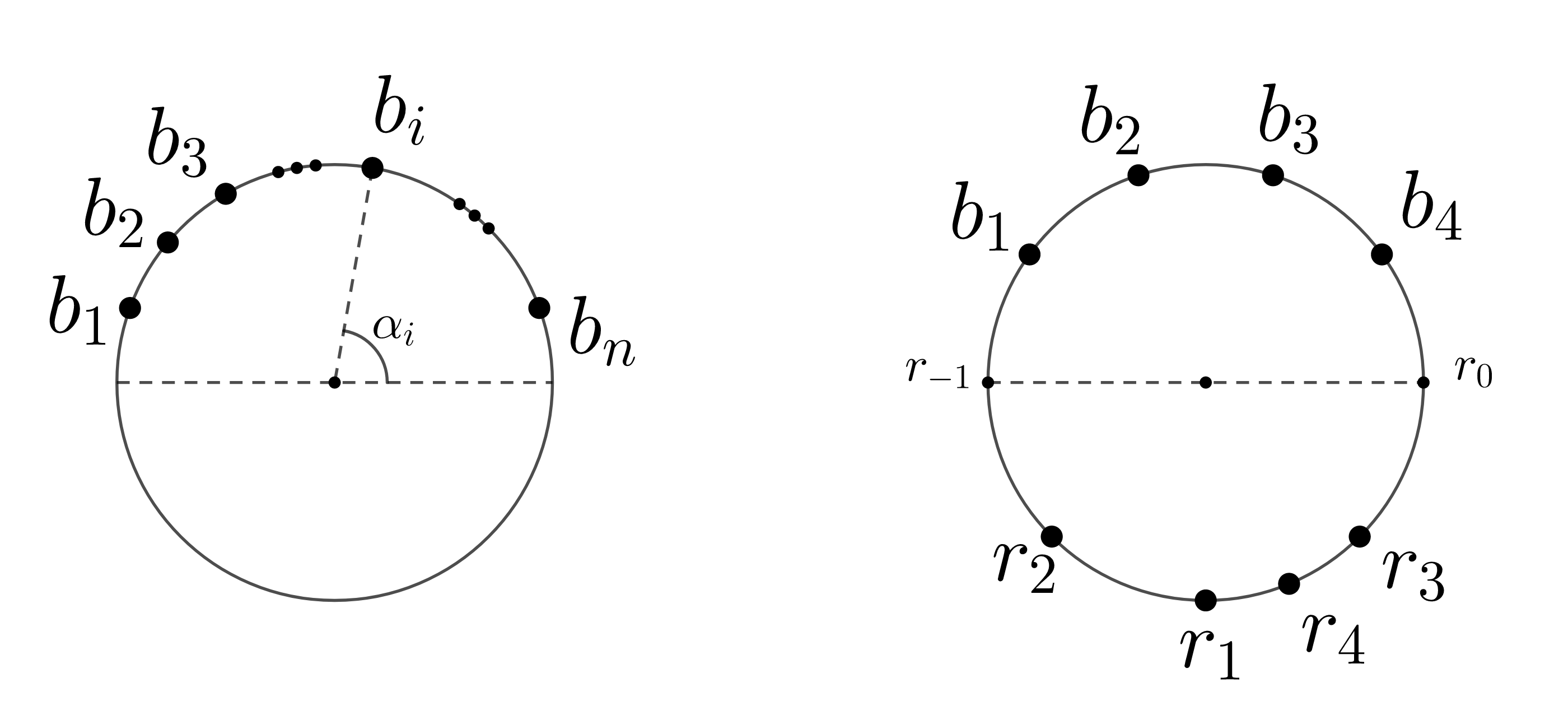}
    \caption{Left: positions of blue points. Right: the input sequence for $n=4$, corresponding the permutation $\sigma=(2,1,4,3)$.}
    \label{fig:BNM}
\end{figure}
Bose et al. \cite{bose2020non} had almost the same procedure of creating instances from permutations, but mistakenly, claimed no deterministic algorithm can solve two different permutations rather than two $231$-avoiding permutations. They did not consider that a single deterministic algorithm can solve inputs $R(\sigma)$ for multiple different $\sigma$ because of non-crossing constraints. For example, input sequences associated with permutations $\sigma=(2,3,1)$ and $\sigma'=(2,1,3)$ can be solved with one deterministic algorithm: the algorithm matches $r_1 = r'_1$ with $b_2$, then if $r_2$ arrives it will be in the right half-plane associated with $(r_1, b_2)$ edge, so it can be matched only with $b_3$ given the non-crossing constraint. If $r'_2$ arrives instead of $r_2$ then it will be in the left half-plane associated with $(r'_1, b_2)$ edge, so it can be matched only with $b_1$ given the non-crossing constraint. Thus, once the algorithm determines to match $r_1$ with $b_2$ it can complete it to a perfect matching in both $R(\sigma)$ and $R(\sigma')$. It implies that not every permutation out of $n!$ possibilities require a different deterministic algorithm. Another way of looking at it is that $r_i$ might be different from $r'_i$ where $i$ is the first coordinate at which $\sigma$ and $\sigma'$ differ unless $\sigma$ and $\sigma'$ are $231$-avoiding. If $r_i \neq r'_i$ then a single deterministic algorithm can use this information to match $r_i$ and $r'_i$ differently.

\section{Advice Complexity of MNM}
\label{sec:monochromatic}

In this section, we present our results for the advice complexity of MNM. We begin by presenting a bound of $3n$ on the advice complexity of solving MNM with general inputs in Subsection~\ref{ssec:ubgp}.
The algorithm using $\lceil \log C_n \rceil$ bits of advice for inputs in convex position for BNM from Section~\ref{sec:bichromatic} can be viewed as an algorithm for MNM that postpones matching points as long as possible leaving the first $n$ points unmatched (treating them as ``blue'' points). In Subsection~\ref{ssec:ubcp}, we present a family of algorithms that have an opposite behavior -- each algorithm in the family tries to match points as soon as possible, so we call them $ASAPMatching$ algorithms. We show that just like $BTMatching$ algorithm, $ASAPMatching$ algorithms achieve optimality with $\lceil \log C_n \rceil$ bits of advice for MNM on inputs in convex position. %
In Subsection~\ref{ssec:lbcirc} we present an $\lfloor n/3 \rfloor$ lower bound for optimally solving MNM on a circle, and in Subsection~\ref{ssec:lbapprox} we show an $\Omega_\alpha(n)$ lower bound for achieving competitive ratio $\alpha$.

\subsection{\texorpdfstring{$3n$}{3n} Upper Bound for General Position}
\label{ssec:ubgp}
Bose et. al. \cite{bose2020non} showed how to solve MNM with $(2 \log 3) n \approx 3.17 n$ bits of advice. In this section, we show that their result can be improved to $3n$ bits of advice by just using a slightly more efficient way of generating advice bits.

On input $P=(p_1,p_2,\ldots,p_{2n})$ the oracle sorts the points by their $x$-coordinates and matches consecutive pairs of points. Let $M$ be the resulting matching. Suppose $p_i$ is matched with $p_j$ in $M$, thus $p_j$ has the closest $x$-coordinate to $p_i$ either on its right or left. The advice string $A$ consists of $n$ parts, i.e. $A=a_1a_2\cdots a_{2n}$ such that for each $1\leq i \leq 2n$, $a_i$ is defined as follows:
\begin{itemize}
    \item $a_i=0$ if $j>i$, i.e. $p_j$ comes after $p_i$;
    \item $a_i=10$ if $i<j$ and $p_j$ is on the left side of $p_i$;
    \item $a_i=11$ if $i<j$ and $p_j$ is on the right side of $p_i$.
\end{itemize}

The online algorithm reads $A$ and reveals the sequence $a_1,\ldots,a_{2n}$ and for $1\leq i \leq 2n$ decides as follows:
\begin{itemize}
    \item if $a_i=0$, leaves $p_i$ unmatched;
    \item if $a_i=10$, matches $p_i$ with its closest $x$-coordinate on the its left;
    \item if $a_i=11$, matches $p_i$ with its closest $x$-coordinate on the its right.
\end{itemize}

Observe that this is just a prefix-free encoding/decoding of the three choices of an algorithm described above. The pseudocode is given in Algorithm~\ref{alg:sorted}.

\begin{algorithm}[h]
\caption{$SortedMatching$ algorithm.}\label{alg:sorted}
\begin{algorithmic}
\Procedure{$SortedMatchingOracle$}{$P=(p_1,\ldots,p_{2n})$}
\State{let $L$ be the sorted list of $P$ by their $x$-coordinates}
\State{$M\gets \emptyset$}
\For{$i=1$ to $n$}
    \State{let $p=L[2i+1]$ and $q=L[2i+2]$}
    \State{$M.append((p,q))$}
\EndFor
\State{$A \gets [ \ ]$}
\For{$i=1$ to $2n$}
    \State{find $j$ such that $(p_i,p_j)\in M$}
    \If {$j>i$}
        \State{$A.append(0)$}
    \ElsIf {$p_j$ is on the left of $p_i$}
            \State{$A.append(10)$}
    \Else
            \State{$A.append(11)$}
    \EndIf
\EndFor
\State{Write $A$ on the tape}
\EndProcedure\\

\Procedure{$SortedMatchingAlgorithm$}{}
\While{receive a new point $p_i$}
    \State{read a bit $b    _1$ from the tape}
    \If{$b_1 = 0$}
        \State{leave $p_i$ unmatched}
    \Else
        \State{read another bit $b_2$ from the tape}
        \If{$b_2=0$}
            \State{match $p_i$ with its closest $x$-coordinate point on its left}
        \Else
            \State{match $p_i$ with its closest $x$-coordinate point on its right}
        \EndIf
    \EndIf
\EndWhile
\EndProcedure
\end{algorithmic}
\end{algorithm}

\begin{theorem}
$SortedMatching$ (Algorithm \ref{alg:sorted}) solves MNM with $2n$ points in general position optimally with $3n$ bits of advice.
\end{theorem}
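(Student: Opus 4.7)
The plan is to verify three things in sequence: that the oracle's matching $M$ is a valid non-crossing perfect matching, that the advice length is exactly $3n$ bits, and that the online algorithm reconstructs $M$ step by step.

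First I would handle the structure of $M$. Let $q_1, q_2, \ldots, q_{2n}$ denote the points of $P$ sorted by $x$-coordinate (distinct by general position), so $M = \{(q_{2k-1}, q_{2k}) : 1 \le k \le n\}$. Each segment of $M$ lies in the vertical strip bounded by the $x$-coordinates of its endpoints, and these $n$ strips have pairwise disjoint interiors. Hence no two segments of $M$ cross, so $M$ is a non-crossing perfect matching, which certifies optimality provided the algorithm ends up reconstructing $M$.

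Next I would bound the advice. The encoding $\{0,10,11\}$ is prefix-free, so the online algorithm can parse $A$ on the fly without knowing chunk boundaries in advance. For each matched pair $(p_i, p_j)$ of $M$ with $i < j$, the oracle writes $a_i = 0$ (one bit) for the earlier-arriving endpoint and $a_j \in \{10, 11\}$ (two bits) for the later-arriving endpoint, contributing exactly $3$ bits per pair. Summing over the $n$ pairs gives exactly $3n$ bits on the tape.

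Finally I would prove correctness by induction on the arrival index $t$, with the invariant that after processing $p_t$ the partial matching built by the algorithm equals $M$ restricted to $\{p_1, \ldots, p_t\}$. The case $a_t = 0$ is immediate. For $a_t \in \{10, 11\}$, write $p_t = q_k$, so its $M$-partner is $q_{k-1}$ or $q_{k+1}$. By the inductive hypothesis, that partner is still unmatched, since its only $M$-partner, namely $q_k = p_t$, has not yet been processed. Any other already-revealed point $q_\ell$ satisfies $\ell \notin \{k-1, k\}$ (when the partner is $q_{k-1}$) or $\ell \notin \{k, k+1\}$ (when the partner is $q_{k+1}$), so $q_\ell$ is strictly farther from $p_t$ in $x$-coordinate than the partner on the specified side. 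Hence the closest unmatched revealed point on that side is precisely the $M$-partner, and the algorithm matches correctly. The only mildly subtle step is this last one: ruling out that some earlier-arrived still-unmatched $q_\ell$ slips in strictly between $p_t$ and its partner in $x$-coordinate, which is immediate from the sorted-strip structure of $M$ together with the fact that no integer index lies strictly between two consecutive integers.
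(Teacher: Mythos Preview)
Your proof is correct and follows the same approach as the paper. The paper's own proof is considerably terser: it simply observes that each edge of $M$ contributes exactly three advice bits (one for the earlier endpoint, two for the later), invokes general position to conclude $M$ is non-crossing, and then declares that the online algorithm reproducing $M$ is ``easy to see.'' Your sorted-strip argument for non-crossing and your inductive correctness argument spell out exactly what the paper leaves implicit, so there is no methodological difference---you have just filled in the details the paper omits.
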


\begin{proof}
Let $(p_i,p_j)$ be an edge in the offline matching $M$ such that $i<j$. Since $p_i$ comes before $p_j$, $a_i$ is $0$ and $a_j$ is either $10$ or $11$ therefore for each edge in $M$ there are $3$ bits of advice and the size of the advice string is $3n$. Since the points are in general position, no three points may lie on the same vertical line. It follows that the matching $M$ produced by $SortedMatchingOracle$ is non-crossing and from the description of the algorithm it is easy to see that the online matching produced by $SortedMatchingAlgorithm$ is the same as $M$.
\end{proof}

\subsection{\texorpdfstring{$\lceil \log C_n \rceil$}{[log cn]} Upper Bound for Convex Position}\label{ssec:ubcp}

In this section, the set of \emph{available} points plays a crucial role. It is defined for an online algorithm as follows: when $p_i$ arrives, a point $p_j$ for $1\leq j < i$ is called \textit{available} if matching $p_i$ with $p_j$ does not create any crossing with the existing edges in the matching constructed by the algorithm by time $i$. Let $A_i$ denote the set of all available points for $p_i$. Observe that since the decisions of the online algorithm are deterministic, the oracle knows existing matching edges at time $i$, and hence it knows $A_i$. 

Recall that $\chi(p)$ denotes the parity of $p$ as described in Subsection~\ref{ssec:onm}. First, observe that in a perfect non-crossing matching $M$ if $p_i$ is matched with $p_j$ then the two points have opposite parities, for otherwise there would be an odd number of points in half-planes associated with the edge $(p_i, p_j)$, guaranteeing that at least one point on each side must remain unmatched. Our advice algorithm is based on the observation that this claim can be ``reversed'': as long as an online algorithm matches points of opposite parities without creating any crossings, this partial non-crossing matching remains valid, that is it can be extended to a perfect non-crossing matching of the whole instance. Of course, the parities are not known to the online algorithm and they cannot be inferred from the input seen so far, since they are based on a complete instance. One possibility is for an oracle to specify all parities with $2n$ bits of advice, but we can do it slightly more efficiently with $\lceil \log C_n \rceil$ bits of advice if $n$ is known to the algorithm in advance. If $n$ is not known then the savings from our more efficient encoding are diminished because of the need to encode numbers up to $C_n$ with a prefix-free code.

Observe that we can insist that as soon as $p_i$ arrives such that $A_i$ contains a point of opposite parity to $p_i$ then $p_i$ is matched. We refer to this property as ``as soon as possible'' or $ASAPMatching$, for short. In an $ASAPMatching$ algorithm, if one point in $A_i$ has parity opposite to $p_i$ then all points in $A_i$ have parity opposite to $p_i$. This means that as long as $ASAPMatching$ algorithm can infer that it is possible to match $p_i$ it can choose any point in $A_i$ for such matching. Therefore $ASAPMatching$ algorithms form a family of algorithms, where the specific algorithm is determined by how ties are broken. The tie-breaking rule is not important for our analysis, so we do not specify it explicitly.

Next, we describe the details of how our algorithm (see Algorithm \ref{alg:asap}) works. The oracle creates a binary string $D=(a_1,\ldots,a_{2n})$. For $i \in \{1, \ldots, 2n\}$, if there exists an available point $p_j \in A_i$ such that $\chi(p_j)\neq \chi(p_i)$  the oracle sets $a_i$ to $1$, otherwise it sets it to $0$. Thus, $a_i$ indicates whether $p_i$ could be matched at step $i$. The number of edges in a perfect matching is $n$ and for every $1\leq i\leq 2n$, at step $i$, the number of points matched at the time of their arrival is not more than the number of  points that were left unmatched at the time of their arrival hence $D$ is a Dyck word. Let $\langle D \rangle$ denote the encoding of a Dyck word in binary. If $n$ is known to the algorithm then $\langle D \rangle$ can consist of just $\lceil \log C_n \rceil$ bits. If $n$ is not known to the algorithm the oracle can encode $n$ using Elias delta coding followed by the encoding of $D$ as before, resulting in $\lceil \log C_n \rceil + \log n + O(\log \log n) \sim 2n -\frac{1}{2} \log n + O(\log \log n)$ bits. 
When the online algorithm is executed, it starts by reading $D$ from the tape. When an online point $p_i$ arrives, if $a_i=1$ the algorithm matches $p_i$ with one arbitrary point of $A_i$ and if $a_i=0$ it leaves $p_i$ unmatched. 


\begin{algorithm}[h]
\caption{$ASAPMatching$ algorithm.}\label{alg:asap}
\begin{algorithmic}
\Procedure{$ASAPMatchingOracle$}{}
\State{$D \gets [0]$}
\For{$i=2$ to $2n$}
    \If{there exist an available point $p_j$ for $p_i$ and $\chi(p_i)\neq \chi(p_j)$}
        \State{$D.append(1)$}
    \Else
        \State{$D.append(0)$}
    \EndIf
\EndFor
\State{write $\langle D \rangle$ on the tape}
\EndProcedure\\

\Procedure{$ASAPAlgorithm$}{}
\State{read $D$ from the tape}
\While{receive a new point $p_i$}
    \If{$d_i = 1$}
        \State{connect $p_i$ to one of the available points} 
    \Else
        \State{leave $p_i$ unmatched}
    \EndIf
\EndWhile
\EndProcedure
\end{algorithmic}
\end{algorithm}

\begin{theorem}
$ASAPMatching$ (see Algorithm \ref{alg:asap}) solves MNM with $2n$ points in convex position optimally with $\lceil \log C_n \rceil + \log n + O(\log \log n)$ bits of advice. Moreover, if $n$ is known to the algorithm only $\lceil \log C_n \rceil$ bits are required.
\end{theorem}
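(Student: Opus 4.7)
The plan is to separately verify the stated advice length and the algorithm's correctness.

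For the advice length, I would show that the binary string $D = a_1 a_2 \cdots a_{2n}$ written by the oracle is a Dyck word. Granting correctness (proved below), each of the $n$ matched pairs contributes exactly one $0$ (at the arrival of its first endpoint, which the oracle leaves unmatched) and one $1$ (at the arrival of its second endpoint, which is immediately matched), so $D$ has $n$ zeros and $n$ ones. Every prefix of $D$ has at least as many $0$s as $1$s because each $1$ must match a previously-arrived unmatched point that contributed a $0$ earlier. By Fact~\ref{fact:catalan}, there are $C_n$ Dyck words of length $2n$, so if $n$ is known to the algorithm in advance the oracle encodes $D$'s index using $\lceil \log C_n \rceil$ bits; otherwise Elias delta coding of $n$ contributes the additional $\log n + O(\log\log n)$ bits.

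For correctness, the invariant to maintain throughout the execution is that the partial matching is non-crossing and every chord in it has endpoints of opposite parities. The central geometric step is a \emph{Balanced Regions Lemma}: under this invariant, each connected region of the convex-hull interior (bounded by chord segments and convex-hull arcs) contains an even number of unmatched boundary positions, split equally between the two parities. I would prove this by traversing a region's boundary cyclically and noting that parity flips at every step --- along each convex-hull arc segment the parities of adjacent positions differ, and crossing each bounding chord also flips parity since chord endpoints have opposite parities. For parity to return to its start value after the full cycle, the total number of boundary positions must be even, with each parity on exactly half of them; since every bounding chord contributes one matched endpoint of each parity, the unmatched boundary positions are also balanced.

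Next I would establish the \emph{ASAP parity-consistency claim}: at every moment of the execution, within each region, all arrived-unmatched points share the same parity. Suppose for contradiction that two such points $p$ and $q$ in a region at time $i$ have opposite parities with $p$ arriving first. Since regions only subdivide as new chords are added, $p$ and $q$ lay in a common region at the time of $q$'s arrival as well, so $p \in A_q$ with parity opposite to $\chi(q)$. The oracle's rule then sets $a_q = 1$ and the algorithm matches $q$ immediately, contradicting $q$'s continued unmatched status at time $i$. Combined with the Balanced Regions Lemma, this shows that whenever $a_i = 1$, all points in $A_i$ have parity opposite to $\chi(p_i)$, so any choice of partner by the algorithm preserves the chord-parity invariant (non-crossingness is automatic from the definition of $A_i$).

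Finally, once all $2n$ points have arrived, each final region's arrived-unmatched set equals its unmatched set. By the Balanced Regions Lemma this set has equal numbers of each parity, while by ASAP parity-consistency it contains points of only one parity --- reconciled only when the set is empty. Hence the algorithm outputs a perfect non-crossing matching. The main obstacle is establishing the Balanced Regions Lemma rigorously, with particular care needed around degenerate chords between adjacent convex-hull positions (whose ``inner'' region is empty) and regions bounded by multiple chords, ensuring the cyclic boundary traversal argument is well-defined in every case.
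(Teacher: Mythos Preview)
Your proof is correct and takes a genuinely different route from the paper's. The paper argues by strong induction on $n$: it identifies the first index $i$ with $\chi(p_i)\neq\chi(p_1)$, observes that $ASAPMatching$ matches $p_i$ to some $p_j$ with $j<i$, and then splits the instance along the chord $(p_i,p_j)$ into two smaller convex MNM instances with the corresponding sub-words of $D$ as advice, invoking the induction hypothesis on each half. Your argument instead maintains an explicit runtime invariant (every chord has opposite-parity endpoints), derives from it the parity-consistency claim that all arrived-unmatched points in a region share one parity, uses that claim to show the invariant is preserved at every step, and closes with the Balanced Regions Lemma at termination. The paper's decomposition is more compact, but your invariant-based argument makes transparent \emph{why} arbitrary tie-breaking is harmless --- the key fact that $A_i$ is parity-homogeneous is stated and proved directly rather than being implicit in the recursion. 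One small note: your sentence ``Combined with the Balanced Regions Lemma, this shows that whenever $a_i=1$, all points in $A_i$ have parity opposite to $\chi(p_i)$'' actually only needs the parity-consistency claim, not the Balanced Regions Lemma; the latter is used only at the very end.
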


\begin{proof}
By Fact, \ref{fact:catalan}we know that the number of Dyck words with length $2n$ is $C_n$. Hence, the encoding described prior to this theorem achieves the desired advice complexity. 

$ASAPMatching$ matches $p_i$ with its available vertices thus it does not create crossing edges. It is left to see that the constructed matching is perfect. We can demonstrate it by strong induction on $n$. For $n = 1$ we have $\chi(p_1) \neq \chi(p_2)$ and we are done. For $n \ge 2$ define $i$ to be the smallest index such that $\chi(p_i) \neq \chi(p_1)$. $ASAPMatching$ then matches $p_i$ with $p_j$ for some $j \in \{1, \ldots, i-1\}$. The line passing through $p_i$ and $p_j$ splits the plane into two half-planes. Let $P_1$ consist of the points from the input that lie in one half-plane. Note that $P_1$ is a sequence and the order of points is the same as their order in $P$. We define $P_2$ similarly but for the other half-plane. Thus $P_1 = (p_{i_1}, p_{i_2}, \ldots, p_{i_k})$ and $P_2 = (p_{j_1}, p_{j_2}, \ldots, p_{j_m})$. Since every point other than $p_i$ and $p_j$ appears either in $P_1$ or $P_2$ we have $k + m + 2 = 2n$. Also, since $\chi(p_i) \neq \chi(p_j)$ we have that $P_1$ and $P_2$ contain even number of points each. Define $D_1 = (a_{i_1}, a_{i_2}, \ldots, a_{i_k})$ and $D_2 = (a_{j_1}, a_{j_2}, \ldots, a_{j_m})$ be the portions of the advice $D$ corresponding to items in $P_1$ and $P_2$, respectively. It is easy to see that $P_1$ forms an input to an MNM problem of size $k/2 < n$ and $D_1$ is a correct advice string corresponding to this input. Similarly for $P_2$. Thus, by induction assumption $ASAPMatching$ creates a perfect matching $M_1$ when it runs on $P_1$ with advice $D_1$, and it also creates a perfect matching $M_2$ when it runs on $P_2$ with $D_2$. Lastly, we observe that the decision of the algorithm for an input point $p$ depends on the advice bit and the set of available points. Since the advice bit and the set of available points for each $p \in P_1$ coincides with the advice bit and the set of available points for $p \in P$ after $p_i$ is matched with $p_j$, the algorithm will construct matching $M_1$ in $P$. Similarly, it will construct matching $M_2$ in $P$, and together with $p_i$ being matched with $p_j$ it gives a perfect matching for the entire instance $P$.

\end{proof}

\subsection{\texorpdfstring{$\lfloor n/3 \rfloor-1$}{[n/3]-1} Lower Bound}\label{ssec:lbcirc}

\begin{theorem}
Any online algorithm that solves MNM on a circle with $2n$ points optimally uses at least $\lfloor n/3 \rfloor-1$ bits of advice.
\end{theorem}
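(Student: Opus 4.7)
The plan is to establish the lower bound via the classical pigeonhole argument based on a carefully designed family of inputs. I will construct a family $\mathcal{I}_n = \{I_s\}_{s \in \{0,1\}^k}$ of inputs on a circle, where $k = \lfloor n/3 \rfloor - 1$, such that any single deterministic online algorithm solves at most one $I_s$ optimally. Combined with Fact~\ref{fact:advice_models}, this yields the claimed $\lfloor n/3 \rfloor - 1$ bound, since $\log_2 |\mathcal{I}_n| = k$.

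Each input $I_s$ will be built from $k$ ``gadget'' blocks of $3$ points each, together with a fixed block of $2n - 3k$ remaining points placed at predetermined locations. The gadgets are spread around well-separated arcs of the circle, so that the non-crossing constraint forces any perfect matching to respect block boundaries and to match points within an arc to other points in the same arc (or to designated ``external'' anchor points in the fixed block). Gadget $i$ is a tight cluster of $3$ collinear-on-the-circle points whose arrival order and placement depend only on $s_i$: for $s_i = 0$, the cluster is arranged so the unique perfect non-crossing matching pairs the middle point with its left companion (leaving the right point to be matched with a later anchor), and for $s_i = 1$ it pairs the middle with its right companion. The key property is that at the moment the third point of the cluster arrives, the algorithm is forced into a binary commitment whose correctness hinges entirely on $s_i$.

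For any two bit strings $s \neq s'$, let $j$ be the first index on which they differ. By construction, $I_s$ and $I_{s'}$ produce an identical prefix of input points through the end of gadget $j-1$ and through the first two points of gadget $j$ (since these depend only on $s_1, \ldots, s_{j-1}$). A deterministic online algorithm therefore makes exactly the same sequence of matching decisions on both inputs up to the arrival of the third point of gadget $j$. However, the correct matching of the triple in gadget $j$ differs between $I_s$ and $I_{s'}$: the algorithm must either match the triple's middle point leftward (for one value of $s_j$) or rightward (for the other), and these two matchings are mutually incompatible with completing to a perfect non-crossing matching under the respective inputs. Hence any deterministic algorithm is correct on at most one of $I_s, I_{s'}$, and pigeonhole gives the advice lower bound.

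The main obstacle will be engineering the gadgets so that (i) within a gadget there are exactly two valid local matchings, (ii) the two local choices across distinct gadgets are genuinely independent (i.e.\ no non-crossing or parity coupling forces a global correlation), and (iii) every bit string $s$ yields an input that admits a perfect non-crossing matching. I plan to leverage the parity function $\chi$ from Subsection~\ref{ssec:onm} and the observation used in the $ASAPMatching$ analysis---that a point can be matched if and only if its available partners have opposite parity---to verify both the existence of the perfect matching for each $I_s$ and the binary nature of the commitment inside each gadget. Placing the gadgets in arcs that are separated by the anchor points of the fixed block, each of which will have been matched by the end of the online phase along a chord cleanly splitting the circle into independent sub-problems, ensures that the choices decouple. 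The arithmetic $3k + (2n - 3k) = 2n$ with $k = \lfloor n/3 \rfloor - 1$ leaves a few spare points in the fixed block to accommodate the anchors, so the construction uses all $2n$ points.
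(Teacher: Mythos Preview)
Your proposal is a plan rather than a proof, and the central mechanism has a genuine gap. You claim that ``at the moment the third point of the cluster arrives, the algorithm is forced into a binary commitment whose correctness hinges entirely on $s_i$,'' while also asserting that the prefixes of $I_s$ and $I_{s'}$ coincide through the first two points of gadget $j$. These two requirements are in tension, and you never construct a gadget that satisfies both. Concretely: if the third point of gadget $j$ differs between $s_j=0$ and $s_j=1$, then the algorithm \emph{sees} that difference upon its arrival and may branch accordingly, so no blind commitment is forced. If instead the third point is also identical for both bit values, then all three gadget points are identical, and since the fixed block is the same for every $s$, nothing in $I_s$ depends on $s_j$ at all. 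The only remaining option is to force the commitment \emph{before} the third point arrives; but with just two gadget points present the algorithm can simply leave both unmatched and wait. In fact, for the natural ``three clustered points plus one external anchor'' configuration, there is always a safe online strategy independent of the anchor's side: match the middle point to whichever of its two companions keeps the \emph{outermost} point free, and that outermost point will be matchable to the anchor without crossings regardless of whether the anchor sits to the left or to the right of the cluster. So a single deterministic algorithm handles both $s_j=0$ and $s_j=1$ for that gadget, and the per-gadget pigeonhole collapses.

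The paper avoids this trap by \emph{not} trying to localise a forced binary decision. It fixes a common prefix of $4k$ points (with $n=3k$), lets the remaining $2k$ points vary over $\sum_{j\le 2k}\binom{4k-1}{j}\ge 2^{4k-2}$ continuations, and observes that any ``prior matching'' an algorithm builds on the prefix must have at least $k$ edges (else too many points remain) and must match only opposite-parity pairs. A parity-counting argument then shows each prior matching is consistent with at most $2^{3k}$ continuations, so at least $2^{k-2}$ distinct advice strings are needed. This is a global counting argument over parities, not a gadget-by-gadget forcing; if you want to rescue your approach you would need to exhibit an explicit gadget with the forcing property and argue carefully why no ``wait and see'' or ``safe edge'' strategy defeats it.
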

\begin{proof}
Let $n=3k$. We create a family of adversarial input instances $\mathcal{I}_n$ such that each instance $I\in \mathcal{I}_n$ contains $6k=2n$ points on the perimeter of a circle. The first $4k$ points $p_1, \ldots, p_{4k}$ are exactly the same in all instances $I \in \mathcal{I}_n$ and are positioned on the perimeter of a circle at regular angular intervals. These points arrive clockwise with $p_1$ located at the North pole.  The interval between $p_i$ and $p_{i+1}$ is called the $i^\text{th}$ interval (the $4k^\text{th}$ interval is between $p_{4k}$ and $p_1$).

Choose $j \in \{0, 1, 2, \ldots, 2k\}$ and choose a subset $S$ of $j$ intervals from the first $4k-1$ intervals. The next $j$ points are placed in the middle of each interval in $S$ and arrive in clockwise order. The remaining $2k-j$ points are placed in the $4k^\text{th}$ interval in clockwise order at regular angular intervals within the $4k^\text{th}$ interval. The family $\mathcal{I}_n$ arises out of all possible choices of $j$ and $S$.


\begin{equation}
\label{eq:size_of_family}
|\mathcal{I}_n| = \sum_{j=0}^{2k} \binom{4k-1}{j} \ge 2^{4k-2}+1.
\end{equation}

For each $I=(p_1,\ldots p_{6k})\in \mathcal{I}_n$, let $X(I)$ be the binary sequence $(\chi(p_1),\ldots, \chi(p_{4k}))$ of parities of the first $4k$ points of $I$ (recall that $\chi(p)$ is the parity of $p$ as described in Subsection~\ref{ssec:onm}). 
Consider  $I \neq I' \in \mathcal{I}_n$ and take the smallest index $j > 4k$ such that $p_j$ belongs to different intervals in $I$ and $I'$. Without loss of generality suppose that the location of $p_j$ in $I$ is before the location of $p_j$ in $I'$ in clockwise order from the North pole. Let $p_\ell$ be the clockwise neighbor of $p_j$ in $I$. Then the parity of $p_\ell$ in $I$ is different from the parity of $p_\ell$ in $I'$. This implies that $X(I) \neq X(I')$ demonstrating that $X : \mathcal{I}_n \rightarrow \{0,1\}^{4k}$ is one-to-one. 

If $M$ is a non-crossing matching on $p_1,\ldots,p_{4k}$ we call it a \textit{prior} matching. We say a prior matching $M$ is \textit{consistent} with an input sequence $I\in \mathcal{I}_n$ if it can be completed to a perfect non-crossing matching on points in $I$. The size of a consistent $M$ should be at least $k$ otherwise with more than $2k$ unmatched points and $2k$ arriving points it can not become a perfect matching. Moreover, for every $(p_i,p_j) \in M$, parities of $p_i$ and $p_j$ should be different, i.e. $\chi(p_i)\neq \chi(p_j)$. Otherwise $(p_i,p_j)$ splits the points of $I$ into two odd sets and it cannot become a perfect non-crossing matching. 


Since $X$ is one-to-one, a single prior matching $M$ can be consistent with at most $2^{3k}$ input sequences: $2^k$ different parities for points with opposite parities in $k$ matched edges and at most $2^{2k}$ different parities for other points.

The first $4k$ point of every in input sequence in $\mathcal{I}_n$ is the same, therefore for a deterministic algorithm, there is a one-to-one relation between the set of the prior matchings that it makes and the set of advice strings from the oracle for solving $\mathcal{I}_n$. Thus we say the advice string $A$ is consistent with input sequence $I\in \mathcal{I}_n$ if the prior matching that the algorithm makes with advice $A$ is consistent with $I$.

The size of the input family is greater than $2^{4k-2}$ and each advice string can be consistent with at most $2^{3k}$ input sequences therefore there should be more than $2^{k-2}$ different advice strings to cover all input sequences in $\mathcal{I}_n$ hence the algorithm needs at least $k-1$ bits of advice.

For $n=3k+1$ and $n=3k+2$, we create a family inputs the same way as $n=3k$ case but with $4k+1$ and $4k+2$ points in the fixed prefix respectively. With the same argument, we also need at least $k-1$ bits of advice in both cases.
\end{proof}

\subsection{\texorpdfstring{$\Omega_\alpha(n)$}{Oa(n)} Lower Bound for \texorpdfstring{$\alpha$}{a}-Approximation}\label{ssec:lbapprox}

In this subsection, we prove a lower bound on the amount of  advice needed for an online algorithm to match at least $2 \alpha n$ points without creating any crossing line segments. That is we study the advice complexity of achieving a strict competitive ratio $\alpha$. It is not clear whether it is possible to reduce from the binary string guessing problem~\cite{bockenhauer2014string} to our problem of interest. As such, instead of using the binary string guessing problem as a black box, we argue from first principles similar to the proof of the lower bound on string guessing. The argument is probabilistic: an algorithm that uses $k$ bits of advice and guarantees that at least $2\alpha n$ points are matched gives rise to a randomized algorithm that matches at least $2\alpha n$ points with a probability of at least $2^{-k}$ on every input (and consequently with respect to random inputs). This can be achieved by setting $k$ advice bits uniformly at random as opposed to having an oracle generate them. We exhibit a distribution on inputs of length $2n$ and show that a randomized algorithm cannot match a lot of points with high probability. The input is generated by a Markov chain process and all the points are on a circle. We maintain an active arc such that all future input points will be generated within this arc. The next point $p_i$ to arrive is placed in the middle of this arc. This splits the current arc into two and the process  continues on one of these two arcs chosen at random. However, it is not enough to simply continue the process on a randomly chosen arc, since an algorithm might decide to keep matching points as soon as they arrive and can be matched. Thus, we need to set up a probabilistic trap for this choice. We do this by deciding at random to insert a ``fake'' point in one arc and continue the process on another arc. We leave a possibility of not having a ``fake'' point and immediately continuing into one of the arcs chosen at random. This Markov chain process is designed so that with a constant probability we can guarantee a future unmatched point no matter whether $p_i$ is matched at time $i$ or is left unmatched by the algorithm. There are several possibilities to consider and the fact that the above trap works in all cases are a bit subtle. Nonetheless, the analysis only requires elementary probability theory and the following well-known tail bound:

\begin{fact}
\label{fact:binom-tail-bound}
If $X$ is a binomial random variable with variables $n$ and $p$, then by the Chernoff bound for $\alpha\in (0,p)$ we have:
$$P\{X\leq \alpha n\} \leq 2^{-n\infdiv{\alpha}{p}}$$
Where $\infdiv{\alpha}{p} = \alpha\log_2 \frac{\displaystyle \alpha}{\displaystyle p} + (1-\alpha)\log_2 \frac{\displaystyle 1-\alpha}{\displaystyle 1-p}$  is the \textit{relative entropy} between an $\alpha$-coin and a $p$-coin.
\end{fact}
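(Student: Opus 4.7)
The plan is to apply the standard exponential moment (Chernoff) bounding technique. For any parameter $t>0$, I would first apply Markov's inequality to the non-negative random variable $e^{-tX}$ to obtain
\[ P\{X \leq \alpha n\} = P\{e^{-tX} \geq e^{-t\alpha n}\} \leq e^{t\alpha n}\cdot E[e^{-tX}]. \]
Writing $X = \sum_{i=1}^n X_i$ as a sum of i.i.d.\ Bernoulli$(p)$ variables and using independence, the moment generating function factorizes as $E[e^{-tX}] = (pe^{-t} + 1 - p)^n$, which gives
\[ P\{X \leq \alpha n\} \leq \bigl(e^{t\alpha}(pe^{-t} + 1 - p)\bigr)^n. \]

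Next I would minimize the base of the $n^\text{th}$ power over $t > 0$. Taking the natural logarithm of the base, differentiating with respect to $t$, and setting the derivative to zero yields the unique optimizer $t^\ast = \ln\frac{p(1-\alpha)}{\alpha(1-p)}$, which is strictly positive precisely because $\alpha < p$. Substituting $t^\ast$ back in, the identity $pe^{-t^\ast} + 1 - p = \frac{1-p}{1-\alpha}$ collapses the base to $\bigl(\tfrac{p}{\alpha}\bigr)^{\alpha}\bigl(\tfrac{1-p}{1-\alpha}\bigr)^{1-\alpha}$. Converting to base-$2$ logarithm, this equals $2^{-\infdiv{\alpha}{p}}$ by the very definition of relative entropy, so raising to the $n^\text{th}$ power yields
\[ P\{X \leq \alpha n\} \leq 2^{-n\,\infdiv{\alpha}{p}}, \]
which is the claimed bound.

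The only mildly delicate step is verifying the algebraic simplification after the substitution of $t^\ast$; everything else is a routine manipulation of the Bernoulli moment generating function together with elementary calculus to optimize a convex function of one variable. Since the inequality is precisely the Chernoff--Hoeffding tail bound in its relative-entropy form, a textbook reference would be an acceptable substitute for the explicit derivation if a terser exposition were preferred.
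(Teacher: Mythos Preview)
Your derivation is correct and is exactly the standard proof of the Chernoff--Hoeffding lower-tail bound in its relative-entropy form: Markov's inequality on $e^{-tX}$, factorization via independence, and optimization over $t>0$ to recover the KL divergence. The paper does not actually prove this statement; it is presented as a well-known fact (introduced with the phrase ``the following well known tail bound'') and invoked as a black box in the subsequent theorem. So your proposal does strictly more than the paper: you supply the argument where the paper simply cites it. Your closing remark that a textbook reference would be an acceptable substitute is precisely what the paper opts for.
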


Now, we are ready to present the main result of this subsection.

\begin{theorem}
Any online algorithm with advice for MNM that guarantees at most $2(1-\alpha)n$ unmatched points, where $\alpha \in (16/17,1)$, reads at least $\frac{\alpha}{2}\infdiv{\frac{2(1-\alpha)}{\alpha}}{1/4}n$ bits of advice.
\end{theorem}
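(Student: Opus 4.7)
The plan is a Yao-style randomization argument: first, convert any advice algorithm with $k$ advice bits into a randomized algorithm $R$ by populating the $k$ advice cells uniformly at random. If the original algorithm always leaves at most $2(1-\alpha)n$ points unmatched, then for every input $R$ succeeds with probability at least $2^{-k}$, so averaging over any input distribution the success probability of $R$ is at least $2^{-k}$. It therefore suffices to exhibit a distribution $\mathcal{D}$ on inputs of length $2n$ on a common circle under which every deterministic online algorithm succeeds with probability at most $2^{-\frac{\alpha}{2}\infdiv{\frac{2(1-\alpha)}{\alpha}}{1/4}n}$.

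The distribution $\mathcal{D}$ is generated by a Markov chain that maintains an ``active arc'' $A_i$ containing all future input points. At step $i$ the chain places $p_i$ at the midpoint of $A_i$, bisecting the remainder into sub-arcs $A_L$ and $A_R$, and independently picks one of four equiprobable continuations: set $A_{i+1}=A_L$, set $A_{i+1}=A_R$, drop an extra ``trap'' point in $A_L$ and set $A_{i+1}=A_R$, or the mirror image of the third option. The crux of the proof is a per-step forcing lemma: conditional on the past, with probability at least $1/4$ the Markov transition at step $i$ commits the algorithm to an additional permanently unmatched point. Geometrically, if the algorithm refuses to match $p_i$, then outcomes that push all subsequent points to a single side of $p_i$ leave $p_i$ stranded, using the same parity obstruction that drives the correctness of $ASAPMatching$. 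If instead the algorithm matches $p_i$ with some earlier $p_j$, the chord $p_ip_j$ separates $A_L$ from $A_R$, and a lone trap dropped on the side opposite to the chosen active arc has no admissible future partner of the correct parity.

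Conditional independence of the chain transitions then lets us couple the number $U$ of unmatched points in the final matching to a Binomial random variable: $U$ stochastically dominates $X\sim \mathrm{Binom}(m,1/4)$, where $m$ is the number of forcing opportunities offered by the chain. Calibrating the chain so that the total point budget is $2n$ in expectation gives $m=\Theta(\alpha n/2)$, and the hypothesis $\alpha>16/17$ is precisely the threshold needed to make Fact~\ref{fact:binom-tail-bound} applicable. Plugging these parameters into the fact yields
\[
\Pr_{\mathcal{D}}[U\le 2(1-\alpha)n]\;\le\;\Pr\bigl[X\le 2(1-\alpha)n\bigr]\;\le\;2^{-\frac{\alpha}{2}\infdiv{\frac{2(1-\alpha)}{\alpha}}{1/4}n},
\]
which combined with $\Pr_{\mathcal{D}}[R\text{ succeeds}]\ge 2^{-k}$ gives the desired lower bound on $k$.

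The main obstacle is the per-step forcing lemma. Unlike the binary string guessing reduction, which the paper has already argued does not apply here, the non-crossing constraint propagates globally, so a naively planted trap can later be defused by a clever matching decision, and different steps risk double-counting the same unmatched point. The technical crux is therefore to calibrate the four Markov transitions and the exact interior positions of the trap points so that \emph{both} algorithmic responses at step $i$---matching $p_i$ versus leaving it unmatched---independently trigger a Bernoulli$(1/4)$ source of a distinct permanently unmatched point. The ``$1/4$'' that appears inside the relative entropy in the statement is exactly this per-step forcing probability.
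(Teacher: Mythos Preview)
Your high-level architecture matches the paper: Yao's principle to pass from advice to randomness, a four-way Markov chain on nested arcs (your four equiprobable continuations are exactly the paper's $(F_i,R_i)\in\{0,1\}^2$), and then the Chernoff tail bound of Fact~\ref{fact:binom-tail-bound}. The divergence is in the forcing lemma, and there your proposal has a genuine gap.

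You claim that at every step $i$, with probability $1/4$ the chain punishes the algorithm \emph{regardless} of whether it matches $p_i$. The ``refuse to match'' branch does not work: if the algorithm leaves $p_i$ unmatched, nothing separates $p_i$ from the active arc, so any future point is free to be matched back to $p_i$. There is no parity obstruction here---the parity argument from $ASAPMatching$ needs a committed chord, and declining to match produces none. Consequently you cannot charge a Bernoulli$(1/4)$ failure to that branch. This also explains why your derivation of $m=\Theta(\alpha n/2)$ is unsupported: if forcing really happened at every chain step, $m$ would be $\Theta(n)$ with no dependence on $\alpha$; the $\alpha$ must enter elsewhere.

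The paper's resolution is to analyze \emph{only} the matching times $T_1<\cdots<T_M$. At each $T_i$ it defines an indicator $X_i$ that the new chord, together with the next one or two chain moves, isolates some point; a case split on whether the two sides $A_i^0,A_i^1$ of the chord are empty shows $X_i$ dominates a variable $Y_i$ depending on $F_{T_i},F_{T_i+1},R_{T_i}$ with $\Pr[Y_i=1]=1/4$. Since consecutive $Y_i$'s may share $F_{T_i+1}$, the paper keeps only the even-indexed $Y_2,Y_4,\dots$, which are genuinely i.i.d. The ``don't match'' case is then handled for free by the accounting identity $U=2n-2M$: on the event $M\ge\alpha n$ there are at least $\alpha n/2$ independent Bernoulli$(1/4)$ indicators whose sum is bounded by $U\le 2(1-\alpha)n$, and this is precisely where the factor $\alpha n/2$ in the exponent and the threshold $\alpha>16/17$ (equivalently $4(1-\alpha)/\alpha<1/4$) come from.
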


\begin{proof}
We begin by describing a distribution of inputs consisting of $2n$ points on a circle, as discussed at the beginning of this section. Each point is either ``fake'' or ``parent'' -- terms that will become clear after we describe the distribution.

Create two sequences $F_1,\ldots, F_{2n}$ and $R_1, \ldots, R_{2n}$ of i.i.d. Bernoulli random variables with parameter $1/2$. Put points $p_1$ and $p_2$ on the North and the South poles of $\mathbb{S}^1$ ($(0,1)$ and $(0,-1)$) respectively and make $p_2$ a \textit{parent}. Starting from $i = 2$, follow the process: if $p_i$ is a parent, let $s_i^0$ and $s_i^1$ be its left and right adjacent arcs (maximal portions of the perimeter of a circle starting with $p_i$ and continuing until a previously generated point is encountered in clockwise and counterclockwise directions, respectively). Place $p_{i+1}$ in the middle of $s_i^{R_i}$. If $F_i=0$ make $p_{i+1}$ a parent and continue the process with $p_{i+1}$. If $F_i=1$, make $p_{i+1}$ a \textit{fake} point, place $p_{i+2}$ in the middle of $s_i^{1-R_i}$, make $p_{i+2}$ a parent  and continue with $p_{i+2}$.

From the above process we see that $F_i$ controls whether a point $p_i$ generated after a parent is fake, and $R_i$ controls whether $p_i$ is placed in the right arc or the left arc. We introduce $P_i$ as the indicator that $p_i$ is a parent. We have $P_i= 1-P_{i-1}F_i$. Next, we analyze the probability with which a randomized algorithm can achieve a large matching. Let $ALG$ be an arbitrary randomized algorithm. Note that at time $i$, $P_i, F_i$, and $R_i$ are not known to the algorithm.  Let $M$ be the size of the matching (random variable) constructed by $ALG$ and $T_1 < ... < T_M$ be the times at which the algorithm matched points. Let $A_i$ be the set of all available points to which $p_{T_i}$ can be connected. Suppose $p_{T_i}$ is parent and the algorithm matches it to $p_j\in A_i$, this matching splits $A_i\setminus\{p_j\}$ into $A_i^0$ and $A_i^1$, points on the left side and the right side of the matching respectively. 

A point becomes \text{isolated} if it is unmatched and cannot be matched afterward. If $|A_i^x|$ is zero and $p_{T_i+1}$ is fake and goes to $s_{T_i}^x$ it will be isolated. If $|A_i^x|$ is greater than zero and $p_{T_i+1}$ is a parent and it goes in $s_{T_i}^{1-x}$, points in $A_i^x$  become isolated.

For $1\leq i \leq M$ let $a_i\in \{00, 0+, +0, ++\}$ indicate the number of points in $A_i^0$ and $A_i^1$ at time $T_i$ (e.g. $a_i=+0$ indicates that $|A_i^0|>0$ and $|A_i^1|=0$). Let $X_i$ be the indicator that $p_{T_i}$ is a parent and the matching at time $T_i$ creates at least one isolated point. Then we have:
\[X_i = \left\{
\begin{array}{ll}
(1-P_{T_i-1}F_{T_i})F_{T_i+1} & \text{ if } a_i = 00 \\
(1-P_{T_i-1}F_{T_i})F_{T_i+1}(1-R_{T_i}) + (1-P_{T_i-1}F_{T_i})(1-F_{T_i+1})R_{T_i} & \text{ if } a_i = 0+ \\
(1-P_{T_i-1}F_{T_i})(1-F_{T_i+1})(1-R_{T_i}) + (1-P_{T_i-1}F_{T_i})F_{T_i+1}R_{T_i} & \text{ if } a_i = 0+ \\
(1-P_{T_i-1}F_{T_i})(1-F_{T_i+1}) & \text{ if } a_i = ++
\end{array} \right.
\]

Values of $X_1$ and $X_M$ do not follow this equation if $T_1=2$ or $T_M=2n$ but it does not affect the asymptotic result and we can ignore these scenarios. Now we define an auxiliary random sequence $Y_1,\ldots, Y_M$ as follows:
\[Y_i = \left\{
\begin{array}{ll}
(1-F_{T_i})F_{T_i+1} & \text{ if } a_i = 00 \\
(1-F_{T_i})F_{T_i+1}(1-R_{T_i}) + (1-F_{T_i})(1-F_{T_i+1})R_{T_i} & \text{ if } a_i = 0+ \\
(1-F_{T_i})(1-F_{T_i+1})(1-R_{T_i}) + (1-F_{T_i})F_{T_i+1}R_{T_i} & \text{ if } a_i = 0+ \\
(1-F_{T_i})(1-F_{T_i+1}) & \text{ if } a_i = ++
\end{array} \right.
\]

It is easy to check that $Y_i\leq X_i$ for $1\leq i\leq M$, for every sequence of decisions of an algorithm, and every outcome of $F_1, \ldots , F_M$ and $R_1,\ldots,R_M$. Note that $Y_2, Y_4, \ldots, Y_{\lfloor \frac{M}{2}\rfloor}$ are i.i.d. Bernoulli random variables with parameter $1/4$. Let $U$ be the number of unmatched points by $ALG$ then $U+2M=2n$ and $\sum_{i=1}^M X_i \leq U$ thus $\sum_{i=1}^{\lfloor \frac{M}{2}\rfloor} Y_{2i}\leq \sum_{i=1}^M X_i \leq 2n-2M$. We have:

\[P\{M \geq \alpha n\} = P\{\sum_{i=1}^M X_i \leq 2n-2M, M \geq \alpha n\} \leq P\{\sum_{i=1}^{\alpha n/2} Y_{2i} \leq 2(1-\alpha)n\}\]

For $\alpha\in (16/17,1)$ by Fact \ref{fact:binom-tail-bound}:


\[P\{M \geq \alpha n\} \leq 2^{-\frac{\alpha n}{2}\infdiv{\frac{4(1-\alpha)}{\alpha}}{1/4}} \]

To conclude the statement of the theorem, suppose we have an algorithm that guarantees at most $2(1-\alpha)n$ unmatched points with $f(\alpha)n$ bits of advice. If we replace advice bits with random bits and run this algorithm then with probability at least $2^{-f(\alpha)n}$ the algorithm creates at most $2(1-\alpha)n$ unmatched points. This implies that
$f(\alpha)\geq \frac{\alpha }{2}\infdiv{\frac{4(1-\alpha)}{\alpha}}{1/4}$.

\end{proof}

\section{Conclusion}
\label{sec:conclusion}

In this paper, we studied the advice complexity of online MNM and BNM. 
We showed that the advice complexity of solving BNM on a circle (or, more generally, on inputs in a convex position) is tightly bounded by the logarithm of the $n^\text{th}$ Catalan number from above and below. This result corrects the previous claim of Bose et al.~\cite{bose2020non} that the advice complexity is $\log (n!)$. At the heart of the result is a connection between non-crossing constraints in online inputs and the $231$-avoiding property of permutations of $n$ elements. The advice complexity of BNM on a plane is left as an open problem:

\begin{openproblem}
What is the advice complexity of BNM on a plane? 
\end{openproblem}

We know that the advice complexity of BNM lies between $\log C_n$ and $n \log n$. We suspect that if the answer to the above question is $\Theta(n)$, then the upper bound is due to a connection between non-crossing constraint in the plane and some structured subset of permutations, such as $\tau$-avoiding permutations for some pattern $\tau$. If the answer to the above question is $\omega(n)$ then this would demonstrate that BNM is, indeed, harder than MNM in the advice setting (which is provably so in the vanilla online model), but as of now, it remains unclear.  


Algorithms for BNM can be seen as a class of algorithms for MNM that wait for the first $n$ points to arrive and only then they start to match points. Since waiting longer would prevent an algorithm from creating a perfect matching, such algorithms postpone matching points as much as possible. On the other side of the spectrum, we introduced $ASAPMatching$ that matches points as soon as possible and we showed that this algorithm always solves MNM on a circle with at most $\log C_n$ bits of advice. 
It is interesting to investigate whether there are natural algorithms ``in between'' waiting for first $n$ arrivals versus matching as soon as possible and whether it makes any difference for MNM on a circle.

\begin{openproblem}
Can every  algorithm with advice for MNM on a circle be converted into a greedy algorithm with advice that matches points as soon as possible while preserving the advice complexity?
\end{openproblem}

We exponentially improved the lower bound on the advice complexity of solving MNM optimally from $O(\log n)$ (due to Bose et al.~\cite{bose2020non}) to $n/3-1$. We also established $\Omega_\alpha(n)$ lower bound for achieving competitive ratio $\alpha$. All our lower bounds are obtained on input points that are located on a common circle. The non-crossing constraint presented an obstacle to using standard proof techniques, such as a reduction from the string guessing problem. This motivates the following open problem.

\begin{openproblem}
Does there exist a reduction from the string guessing problem to MNM/BNM on a circle/plane?
\end{openproblem}

The approximation version of MNM/BNM considered in this paper is obtained by relaxing the perfect matching constraint and retaining the non-crossing constraint. It is quite natural to consider an alternative definition where the perfect matching constraint is retained and the non-crossing constraint is relaxed. That is, in this alternative approximation version of BNM/MNM an algorithm must always produce a perfect matching but the number of crossings of the edges of such a matching should be bounded by some parameter. Studying the advice complexity of this version and how it is related to the version of the problem in this work is another open problem of  interest.





\bibliographystyle{./styles/packages/unsrtabbrv}

\bibliography{bibs/full,bibs/refs}
 
\newpage

\end{document}